\newtheorem{theorem}{Theorem}
\newtheorem{Algorithm}{Algorithm}
\newtheorem{remark}{Remark}
\newcommand{\xb}{{\bm{x}}}
\newcommand{\Xb}{{\bm{X}}}
\newcommand{\ub}{{\bm{u}}}
\newcommand{\betab}{{\bm{\beta}}}
\newcommand{\sigmab}{{\bm{\sigma}}}
\newcommand{\Sigmab}{{\bm{\Sigma}}}
\newcommand{\thetab}{{\bm{\theta}}}
\providecommand{\keywords}[1]
{
  \small	
  \textbf{\textit{Keywords: }} #1
}
\begin{document}

\title{Adaptive Accelerated Failure Time modeling with a Semiparametric Skewed Error Distribution}

\author[1]{Sangkon Oh}
\author[2]{Hyunjae Lee}
\author[3]{Sangwook Kang}
\author[4]{Byungtae Seo}
\affil[1]{Department of Statistics, Ewha Womans Univerisity}
\affil[2]{Department of Statistics, University of Pittsburgh}
\affil[3]{Department of Applied Statistics, Yonsei University}
\affil[4]{Department of Statistics, Sungkyunkwan University}

\date{}

\maketitle

\begin{abstract}
The accelerated failure time (AFT) model is widely used to analyze relationships between variables in the presence of censored observations. However, this model relies on some assumptions such as the error distribution, which can lead to biased or inefficient estimates if these assumptions are violated. In order to overcome this challenge, we propose a novel approach that incorporates a semiparametric skew-normal scale mixture distribution for the error term in the AFT model. By allowing for more flexibility and robustness, this approach reduces the risk of misspecification and improves the accuracy of parameter estimation. We investigate the identifiability and consistency of the proposed model and develop a practical estimation algorithm. To evaluate the performance of our approach, we conduct extensive simulation studies and real data analyses. The results demonstrate the effectiveness of our method in providing robust and accurate estimates in various scenarios.

\end{abstract}
\keywords{Nonparametric maximum likelihood estimator, Robust estimation, survival analysis}

\section{Introduction }
\label{sec1}
The accelerated failure time (AFT) model is widely used to analyze the relationship between covariates and a log-transformed failure time when there exist some censored observations. One notable advantage of the AFT model is that covariates directly affect the failure time, facilitating an intuitive interpretation of their impact and enabling predictions. If the distribution of the error is explicitly specified, the corresponding AFT model is referred to as a parametric AFT model.
Since parametric AFT models rely on the assumption regarding the underlying distribution of the error such as the normal, logistic or extreme value distributions, the deviation from the assumption for the error may lead bias or incorrect conclusions in the estimated parameters and subsequent inference.

For an alternative of parametric models, a semiparametric AFT model which does not require the error distribution can be employed. Commonly, this model is estimated using rank-based estimators, derived from the partial likelihood score function of the proportional hazards model \citep{prentice1978linear}. 
Another widely used approach for estimating regression coefficients in the semiparametric AFT model is the least squares method proposed by \cite{buckley1979linear}. This method replaces censored observations with their conditional expectations based on available information.

A limitation of the aforementioned methods for the semiparametric AFT model is the requirement to estimate the intercept separately from the other coefficients, which sets them apart from the parametric AFT model. Estimators for the intercept in these methods are not always consistent and may lead to biased estimates, despite the importance of the intercept in determining mean failure time and predicting failure times \citep{ding2015estimating}.
 In contrast, \cite{AFT_NGSM} suggested assuming a nonparametric Gaussian scale mixture distribution for the error and proposed the estimation method within the likelihood framework. This approach preserves the advantage of parametric models by enabling direct estimation of the intercept without the need for a separate procedure, while allowing for flexibility in modeling error distributions as rank-based estimators and least squares methods.

Nevertheless, it fails to capture asymmetric unimodal distributions although the class of the Gaussian scale mixture distributions covers a wide range of symmetric unimodal distributions. To overcome this issue, \cite{mattos2018likelihood} explored the use of some members in the class of skew-normal scale mixture distributions for modeling the error in censored regression, and \cite{ferreira2022linear} incorporated it into linear mixed models. Although these approaches address the issue of misspecification when the error follows whether the symmetric or asymmetric distribution, they are constrained to employing particular members from the class of skew-normal scale mixture distributions. Consequently, the methods require model selection from within this limited set of specific members, and there is still a potential for encountering misspecification issues.

In this paper, we present a novel approach by introducing a semiparametric skew-normal scale mixture distribution (SSNSM) for the error in the AFT model. The SSNSM distribution offers greater flexibility by accommodating a wide range of distributions, including both the class of the Gaussian scale mixture distributions and important asymmetric unimodal distributions such as skew-normal, skew-t and skew-slash distributions, without any model selection procedure. While some distributions may not strictly belong to the class of skew-normal scale mixture distributions, we can expect that the flexibility of SSNSM minimizes a potential misspecification problem. Moreover, the stability of intercept estimation is maintained as it is directly estimated within the likelihood framework.
Recently, \cite{finitemixture_ssnsm} proposed a finite mixture model with multivariate SSNSM for each component distribution, and demonstrated the superiority of the SSNSM in capturing the characteristics of each component distribution.

The remainder of this paper is organized as follows. Section 2 and Section 3 review AFT models and the SSNSM distribution, respectively. Section 4 presents the proposed model along with estimation procedure. Simulation studies are presented in Section 5, while applications to real-world datasets are discussed in Section 6. Some concluding remarks are provided in Section 7.

\section{Accelerated failure time models}
\label{sec2}

\subsection{Parametric AFT model}
The AFT model is a useful model for directly investigating the relationship between log-transformed failure times and covariates, which can be represented as
\begin{align}
\label{aft}
\log T = \beta_0 + \xb^\top \betab + \epsilon,
\end{align}
where $T$ denotes the potential failure time, $\xb$ is a $p$-dimensional vector of covariates, $(\beta_0, \betab^{\top})^{\top}$ represents a $(p+1)$-dimensional vector of regression coefficients, and $\epsilon$ is an error term satisfying $E(\epsilon)=0$. In the presence of right censoring, the observed time is defined as $Y = \min(T,C)$, where $C$ is the potential censoring time. Additionally, an indicator variable $\delta$ is introduced, denoted as $\delta = I(T \leq C)$, which takes the value $\delta = 1$ if $T$ is observed and $\delta = 0$ otherwise. We further assume that $T$ and $C$ are conditionally independent given the covariates $\xb$. The observed data can be represented as $(y_i, \delta_i, \xb_i^{\top})$, $i=1,2,\ldots,n$, where all samples are independent and identically distributed and $n$ represents the sample size.

Let $f(\cdot)$ and $F(\cdot)$ denote the probability density function (pdf) and cumulative distribution function (cdf) of $\epsilon$, respectively. The corresponding survival function is denoted as $S(\cdot) = 1 - F(\cdot)$. Then, for a given sample of $n$ observations, the likelihood function based on \eqref{aft} can be expressed as
\begin{align*}
L(\thetab) = \prod_{i=1}^{n} f(\epsilon_i ; \thetab)^{\delta_i} S(\epsilon_i ; \thetab)^{1-\delta_i},
\end{align*}
where $\thetab$ represents all the parameters in the model. By maximizing this likelihood function with respect to $\thetab$, we can obtain the maximum likelihood estimator (MLE) for $\thetab$. A common choice for the distribution of $\epsilon$ is a normal density, but other distributions can also be considered. Although parametric AFT models are advantageous in computing the estimate of $\thetab$ and in studying theoretical properties for the estimators, the unverifiable parametric assumption poses a risk of misspecification, as the assumed distribution may not accurately reflect the true underlying distribution of the failure time. In this case, the estimator may produce a seriously biased estimator and experience a substantial efficiency loss. 

\subsection{Semiparametric approaches}
As an alternative to parametric AFT models, a semiparametric AFT model which does not require a specification of error distribution has been proposed. One popular approach for estimating the semiparametric AFT model is a rank-based estimator \citep{prentice1978linear}. The rank-based estimator with a Gehan-type weight function has estimating functions given by 
\begin{align}
U_g(\betab) = \sum_{i=1}^{n} \sum_{j=1}^{n} \delta_i ( \xb_i - \xb_j) I[\epsilon_j(\betab) \geq \epsilon_i(\betab)], \label{gehan}
\end{align}
where $\epsilon_i(\betab) = \log y_i - \xb_i^{\top} \betab$ for $i=1,2,\dots,n$.
To address computational challenges arising from lack of smoothness in \eqref{gehan},  \cite{brown2007induced} proposed an induced smoothing method with the Gehan-type weight. The induced smoothed estimator can be obtained using
\begin{align*}
U_s(\betab) &= \sum_{i=1}^{n} \sum_{j=1}^{n}   \delta_i ( \xb_i - \xb_j)  \Phi \Bigg [n^{1/2} \Bigg \{ \frac{\epsilon_j(\betab) -\epsilon_i(\betab)}{(\xb_i - \xb_j)^{\top} \Sigmab (\xb_i - \xb_j)}  \Bigg \} \Bigg ],
\end{align*}
where $\Phi(\cdot)$ denotes the cdf of the standard normal distribution, and $\Sigmab$ indicates a symmetric and positive definite matrix with dimensions $p \times p$.
These rank-based estimators are consistent and asymptotically normal under certain regularity conditions, providing reliable estimates of the regression coefficients $\betab$ (\citealp{tsiatis1990estimating}; \citealp{ying1993large}).

Another popular estimator is Buckley-James estimator \citep{buckley1979linear} that is based on a modified least-squares method.  This estimator is obtained by solving the equations
\begin{align}
\label{BJ}
\sum_{i=1}^{n} (\xb_i - \bar{\xb}) (\log \hat{Y}_i - \xb_i^{\top} \betab) = 0,
\end{align}
where $\bar{\xb}$ represents the mean of the covariates and $\log \hat{Y}_i$ is the estimated expected value of $\log T_i$ given $Y_i$, $X_i$ and $\delta_i$.
However, solving equations \eqref{BJ} directly for $\betab$ is numerically unstable which often suffers from non-convergence issues. To address this issue, \cite{jin2006least} proposed an iterative approach based on generalized estimating equations (GEE).
The GEE approach involves solving the equations by iteratively updating between $\bm{b}$ and $\betab$:
\begin{align*}
\sum_{i=1}^{n} (\xb_i - \bar{\xb}) \left ( \log \hat{Y}_i (\bm{b}) - n^{-1} \sum_{i=1}^{n} \log \hat{Y}_i (\bm{b}) - ( \xb_i - \bar{\xb})^{\top} \betab \right) = 0,
\end{align*}
which ensures convergence with a consistent initial estimator. 

In aforementioned approaches for the semiparametric AFT model, estimating the intercept is typically handled separately from estimating the other coefficients. \cite{ding2015estimating} proposed the method of estimating the intercept as $\hat{\beta}_0 = \int_{-\infty}^{\infty} t d\hat{F}(t;\hat{\betab})$, where $\hat{F}(t;\hat{\betab})$ is the Kaplan-Meier estimator of the distribution function of $\epsilon$ given estimates ${\hat{\betab}}$. Under certain regularity conditions, such as the unboundedness of the support of $\xb$, this estimator has been shown to be consistent for $\beta_0$.
However, in practical scenarios where the covariate support is narrow and bounded, the estimator of $\beta_0$ may not perform well. 

A recent study by \cite{AFT_NGSM} proposed modeling the error term $\epsilon$ in \eqref{aft} using the nonparametric Gaussian scale mixture distribution. This approach offers an advantage over rank-based estimators and least squares methods as it enables direct estimation of the intercept without requiring a separate procedure. Furthermore, the estimator for $(\beta_0, \betab^{\top})^{\top}$ using the nonparametric Gaussian scale mixture error distribution is the MLE, and has been shown to be consistent.
However, it may still face the issue of misspecification because the class of Gaussian scale mixture distributions cannot encompass asymmetric unimodal distributions, despite including many essential symmetric unimodal distributions. To address this limitation, we introduce the SSNSM distribution  as an alternative to the nonparametric Gaussian scale mixture distribution in the following section.

\section{Semiparametric skew normal scale mixtures}
\label{sec3}

To address the need for a distribution that can demonstrate both symmetric and asymmetric data, a skew-normal distribution was introduced by \citet{azzalini1985class} as
\begin{align}
f(\epsilon; \xi, \lambda, \sigma) =  \frac{2}{\sigma} \phi\left(\frac{\epsilon - \xi}{\sigma}\right)  \Phi\left(\lambda \frac{\epsilon - \xi }{\sigma}\right), 
\label{sn}
\end{align}
where $\phi(\cdot)$ and $\Phi(\cdot)$ represent the pdf and cdf of the standard normal distribution, respectively. The parameters of the distribution include $\xi$ as the location parameter, $\lambda$ as the slant parameter and $\sigma$ as the scale parameter. When $\lambda$ is equal to zero, the skew-normal density turns into the standard normal density. The magnitude of the skewness is controlled by the value of $\lambda$, with positive values indicating right skewness and negative values indicating left skewness.

In order to accommodate heavier tails in comparison to those of the skew-normal distribution \eqref{sn}, a skew-normal scale mixture distribution was introduced by \cite{branco2001general}. 
When $\epsilon$ follows a skew-normal scale mixture distribution, denoted as $\text{SNSM}(\xi, Q, \lambda)$, the density function of this distribution is given by
\begin{align}
f(\epsilon; \xi, \lambda, Q)=\int_{0}^{\infty} \frac{2}{\sigma} \phi\left(\frac{\epsilon - \xi }{\sigma}\right)  \Phi\left(\lambda \frac{\epsilon - \xi}{\sigma}\right) dQ(\sigma),  \label{SSNSM}
\end{align}
where $Q$ represents the latent distribution function defined on $\mathbb{R}^+=(0,\infty)$. The class of skew-normal scale mixture distributions encompasses not only the class of Gaussian scale mixture distributions containing the normal, t and Laplace, but also skewed distributions such as the skew-normal, skew-t, skew-slash, skew-contaminated normal and others. The previous studies have been successfully applied to model error distributions in various statistical models. These studies include the use of \eqref{SSNSM} in robust mixture of regressions by \cite{zeller2016robust}, its application in censored regression by \cite{mattos2018likelihood}, and its incorporation into linear mixed models by \cite{ferreira2022linear}. Note that these methods assume parametric forms for the distribution of $Q$, which requires additional model selection problem. 

However, assuming a specific parametric form for $Q$ in \eqref{SSNSM} can lead to misspecification issues if the true underlying distribution deviates from the assumed form. Moreover, selecting the appropriate parametric model for $Q$ requires model selection procedures, which can be challenging. To address these limitations, it is advantageous to leave $Q$ unspecified, allowing for greater flexibility in modeling a wide range of distributions. By not making explicit assumptions about the form of $Q$, the SSNSM distribution provides automatic adaptability to various types of data. This flexibility enables the SSNSM to achieve robustness and adaptiveness without the need for explicit model selection procedures.

With parametric $Q$, the estimation of parameters can be achieved by using some standard optimization algorithms such as Newton-type algorithms and the expectation-maximization algorithm. However, with unspecified distribution $Q$, those algorithms cannot be directly applied because we need to maximize the likelihood over infinite dimensional space. Fortunately, there exist several algorithms based on the directional derivative, including the vertex direction method \citep{Bohning85}, the vertex exchange method \citep{Bohning86}, the intra-simplex direction method \citep{LK92} and constrained Newton method for multiple support points (CNM; \citealp{Wang07}). These algorithms are employed to compute the nonparametric maximum likelihood estimator (NPMLE) of $Q$ in \eqref{SSNSM}.

To explain this, consider a random sample $\epsilon_1, \ldots, \epsilon_n$ drawn from the SSNSM distribution. Denote the log-likelihood of $Q$ as $\tilde{\ell}_n(Q) = \sum_{i=1}^n \log f(\epsilon_i; \xi, \lambda, Q)$, where $\xi$ and $\lambda$ are fixed. The NPMLE of $Q$ can be characterized using the directional derivative of $\tilde{\ell}_n(\hat Q)$ from $\hat Q$ (current estimator) to a unit step distribution function at $\sigma$, denoted by $H_\sigma$:
\begin{align*}
D_{\hat Q}(\sigma) = \lim_{\alpha \to 0} \frac{\tilde{\ell}_n\{(1-\alpha){\hat Q} + \alpha H_\sigma\} - \tilde{\ell}_n({\hat Q})}{\alpha}.
\end{align*}
If $D_{\hat Q}(\sigma) > 0$, it implies that there exists some $0 < \alpha \leq 1$ such that $\tilde{\ell}_n\{(1-\alpha){\hat Q} + \alpha H_\sigma\} > \tilde{\ell}_n({\hat Q})$. In other words, the current estimator ${\hat Q}$ is not the NPMLE, and updating the estimator with $(1-\alpha){\hat Q} + \alpha H_\sigma$ can lead to bigger likelihood than before.
The necessary and sufficient condition for $\hat{Q}$ to serve as the NPMLE of $Q$ is that $D_{\hat Q}(\sigma) \leq 0$ holds for all $\sigma \in \mathbb{R}^+$, and $D_{\hat Q}(\sigma^*) = 0$ for all support points $\sigma^*$ in $S(\hat{Q})$, where $S(\hat{Q})$ denotes the set of support points of $\hat{Q}$ \citep{Lindsay_1995}.
The NPMLE has proven to be a valuable tool in a variety of statistical models, such as the linear regression problem (\citealp{Seo_2017}), generalized autoregressive conditional heteroscedasticity model (\citealp{SL15}), AFT model \citep{AFT_NGSM}, finite mixture models (\citealp{XYS16}; \citealp{finitemixture_ssnsm}) and finite mixture of regressions \citep{FMR_NGSM}.

\section{AFT with the SSNSM error distribution}
\label{sec4}

\subsection{Proposed model}
Under \eqref{aft} and \eqref{SSNSM}, the pdf of $\epsilon$ is given by
\begin{align}
\label{aft_ssnsm}
f(\epsilon;\thetab, Q) = \int_{0}^{\infty} \frac{2}{\sigma} \phi\left(\frac{\epsilon }{\sigma}\right)  \Phi\left(\lambda \frac{\epsilon }{\sigma}\right) dQ(\sigma), 
\end{align}
where $\thetab = (b_0, \betab^{\top}, \lambda)^{\top}$ and $\epsilon = \log T - b_0 - \xb^{\top} \betab$. The parameter $b_0$ represents the location parameter for $\epsilon$ following $\text{SNSM}(0, Q, \lambda)$, while $\betab$ denotes the regression coefficients for slopes.
 $\lambda$ indicates the slant parameter, and $Q$ represents the latent distribution of the scale parameter $\sigma$.
For accurate estimates of the mean failure time, the intercept term $\beta_0$ is computed from $\beta_0 = b_0 + \sqrt{\frac{2}{\pi}} \frac{\lambda}{\sqrt{1 + \lambda^2}} \int_{0}^{\infty} \sigma d Q(\sigma)$ because    
 \begin{align*}
E[\epsilon \mid \xb] =  \sqrt{\frac{2}{\pi}} 
 \frac{\lambda}{\sqrt{1 + \lambda^2}} \int_{0}^{\infty} \sigma d Q(\sigma).
\end{align*}
A similar approach has been utilized in prior studies such as \cite{zeller2016robust}, \cite{mattos2018likelihood} and \cite{ferreira2022linear} to derive accurate estimates of the mean failure time.

\cite{mattos2018likelihood} proposed to use some specific parametric distributions of $Q$ to construct skew-normal, skew-t, skew-slash, and skew-contaminated normal distributions. This approach is beneficial when the error distribution is known or can be reasonably assumed based on domain expertise. However, such an approach can result in the misspecification problem, which can lead to biased or inconsistent estimates. In contrast, the proposed method with SSNSM error distribution \eqref{aft_ssnsm} offers a high level of flexibility by not specifying the latent distribution $Q$, eliminating the need for a specific parametric form. While the proposed method is not completely immune to misspecification, it is reasonable to assume that the error distribution aligns with or closely resembles the class of skew-normal scale mixture distributions. Additionally, the likelihood approach for estimating all coefficients provides an advantage over methods that require a separate procedure for intercept estimation. Note that the proposed method incorporates all the advantages of the \cite{AFT_NGSM} approach because the class of skew-normal scale mixture distributions contains that of Gaussian scale mixture distributions.

The identifiability of \eqref{aft_ssnsm} could be an issue becasue $Q$ is unspecified in our model. We show this identifiability in Theorem \ref{thm1}.
\begin{theorem}
    \label{thm1}
     Suppose that the support of $\xb$ contains an open set in $\mathbb{R}^{p}$. If 
    \begin{align*}
    \int_{0}^{\infty} \frac{2}{\sigma} \phi\left(\frac{\log t - b_0 - \xb^{\top}\betab }{\sigma}\right)  \Phi\left(\lambda \frac{\log t - b_0 - \xb^{\top}\betab }{\sigma}\right) dQ(\sigma) \\
    = \int_{0}^{\infty} \frac{2}{\sigma} \phi\left(\frac{\log t - \tilde{b}_0 - \xb^{\top}\tilde{\betab} }{\sigma}\right)  \Phi\left(\tilde{\lambda} \frac{\log t - \tilde{b}_0 - \xb^{\top}\tilde{\betab} }{\sigma}\right) d\tilde{Q}(\sigma),
    \end{align*}
    then $(b_0, \betab^{\top}, \lambda, Q)^{\top}$ is equal to $(\tilde{b}_0, \tilde{\betab}^{\top}, \tilde{\lambda}, \tilde{Q})^{\top}$ for almost all $t$.
\end{theorem}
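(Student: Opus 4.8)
The plan is to first use the covariate condition to strip away the regression part and reduce the statement to an identifiability claim for the one‑dimensional $\mathrm{SNSM}$ family, and then settle that claim with characteristic functions. Write $u=\log t$ and let $g_{Q,\lambda}$ denote the density of $\mathrm{SNSM}(0,Q,\lambda)$, so the hypothesis reads $g_{Q,\lambda}(u-b_0-\xb^\top\betab)=g_{\tilde Q,\tilde\lambda}(u-\tilde b_0-\xb^\top\tilde\betab)$ for a.e.\ $u$ and every $\xb$ in some open set $\mathcal O\subset\mathbb R^{p}$ (all equalities below are up to $u$-null sets, which is harmless). Fixing two points $\xb_1,\xb_2\in\mathcal O$ and subtracting the two corresponding shifted identities shows that $g_{Q,\lambda}$ coincides with its own translate by $(\xb_1-\xb_2)^\top(\tilde\betab-\betab)$. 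An integrable probability density admits no nonzero period (otherwise its integral over $\mathbb R$ would be $0$ or $\infty$), so $(\xb_1-\xb_2)^\top(\tilde\betab-\betab)=0$; letting $\xb_1-\xb_2$ range over a neighborhood of the origin forces $\betab=\tilde\betab$. The identity then collapses to $g_{Q,\lambda}(\epsilon)=g_{\tilde Q,\tilde\lambda}(\epsilon-c)$ for all $\epsilon$, with $c:=\tilde b_0-b_0$; that is, $\mathrm{SNSM}(0,Q,\lambda)\stackrel{d}{=}\mathrm{SNSM}(c,\tilde Q,\tilde\lambda)$.

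For the reduced problem I would pass to characteristic functions, using that the characteristic function of $\mathrm{SNSM}(0,Q,\lambda)$ is $\psi_{Q,\lambda}(s)=\int_0^\infty e^{-\sigma^2 s^2/2}\bigl(1+i\,T(\delta_\lambda\sigma s)\bigr)\,dQ(\sigma)$, where $\delta_\lambda=\lambda/\sqrt{1+\lambda^2}\in(-1,1)$ and $T(x)=\sqrt{2/\pi}\int_0^x e^{v^2/2}\,dv$ is odd and strictly increasing. The structural fact I would lean on is that $\mathrm{Re}\,\psi_{Q,\lambda}(s)=\int_0^\infty e^{-\sigma^2 s^2/2}\,dQ(\sigma)$ — the characteristic function of the symmetric Gaussian scale mixture $\int\tfrac1\sigma\phi(\cdot/\sigma)\,dQ(\sigma)$ — is strictly positive for every $s$. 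The distributional identity is equivalent to $\psi_{Q,\lambda}(s)=e^{ics}\psi_{\tilde Q,\tilde\lambda}(s)$ for all $s\in\mathbb R$. Since both characteristic functions lie in the open right half‑plane, their principal arguments are continuous functions into $(-\pi/2,\pi/2)$ vanishing at $s=0$; comparing arguments and using that a continuous integer‑valued function on $\mathbb R$ is constant yields $\arg\psi_{Q,\lambda}(s)-\arg\psi_{\tilde Q,\tilde\lambda}(s)=cs$ for all $s$. The left side is bounded and the right side is not, unless $c=0$. Hence $c=0$ and $\psi_{Q,\lambda}\equiv\psi_{\tilde Q,\tilde\lambda}$.

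It then remains to read off $Q$ and $\lambda$ from real and imaginary parts. Equating real parts gives $\int_0^\infty e^{-\sigma^2 s^2/2}\,dQ(\sigma)=\int_0^\infty e^{-\sigma^2 s^2/2}\,d\tilde Q(\sigma)$ for all $s$; after the substitutions $v=\sigma^2$ and $r=s^2/2$ this is equality of the Laplace transforms of the pushforwards of $Q$ and $\tilde Q$ under $\sigma\mapsto\sigma^2$, so $Q=\tilde Q$ by uniqueness of the Laplace transform. With $Q=\tilde Q$, equating imaginary parts gives $\int_0^\infty\bigl(T(\delta_\lambda\sigma s)-T(\delta_{\tilde\lambda}\sigma s)\bigr)e^{-\sigma^2 s^2/2}\,dQ(\sigma)=0$ for all $s$; if $\delta_\lambda\neq\delta_{\tilde\lambda}$, say $\delta_\lambda>\delta_{\tilde\lambda}$, then for $s>0$ strict monotonicity of $T$ makes the integrand strictly positive for every $\sigma>0$, a contradiction. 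Hence $\delta_\lambda=\delta_{\tilde\lambda}$, and since $\lambda\mapsto\delta_\lambda$ is a bijection, $\lambda=\tilde\lambda$, yielding $(b_0,\betab^\top,\lambda,Q)^\top=(\tilde b_0,\tilde\betab^\top,\tilde\lambda,\tilde Q)^\top$.

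I expect the only delicate points to be bookkeeping rather than conceptual: verifying the closed form of the $\mathrm{SNSM}$ characteristic function (equivalently, recognizing that the symmetric part of an $\mathrm{SNSM}$ law is exactly the Gaussian scale mixture with the same mixing $Q$, which is what makes $\mathrm{Re}\,\psi_{Q,\lambda}>0$) and justifying the interchanges of integration and limits, all routine given the Gaussian decay of the kernel. The genuine crux is disentangling the unknown location shift $c$ from the unknown slant, and that is precisely where the positivity of $\mathrm{Re}\,\psi_{Q,\lambda}$ does the work: it confines the phase of each characteristic function to a bounded interval, leaving no room for a nontrivial translation.
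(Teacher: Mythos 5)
Your proof is correct, and while its endgame coincides with the paper's, the way you handle the regression part is genuinely different. The paper also passes to characteristic functions and writes the hypothesis as $\exp(i(b_0+\xb^\top\betab)s)\,\psi_{Q,\lambda}(s)=\exp(i(\tilde b_0+\xb^\top\tilde\betab)s)\,\psi_{\tilde Q,\tilde\lambda}(s)$, but it rules out a discrepancy in the linear predictor by supposing the set where $b_0+\xb^\top\betab\ne\tilde b_0+\xb^\top\tilde\betab$ (and then where the slopes differ) is a nonempty open set, rescaling by a factor $c$ near $1$, invoking analyticity in $c$ to extend to all real $c$, and averaging $\frac{1}{2C}\int_{-C}^{C}\exp(icr)\,dc\to 0$ to force $\psi_{Q,\lambda}\equiv 0$, contradicting $\psi_{Q,\lambda}(0)=1$. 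You instead obtain $\betab=\tilde\betab$ from the elementary observation that an integrable density admits no nonzero period (comparing the identity at two covariate values in the open set), and you eliminate the residual location shift $c=\tilde b_0-b_0$ by a phase argument: since the even part of a skew-normal density is the normal density, $\mathrm{Re}\,\psi_{Q,\lambda}(s)=\int_0^\infty e^{-\sigma^2 s^2/2}\,dQ(\sigma)>0$, so both characteristic functions have continuous principal arguments confined to $(-\pi/2,\pi/2)$, and the relation $\arg\psi_{Q,\lambda}(s)-\arg\psi_{\tilde Q,\tilde\lambda}(s)=cs$ with a bounded left-hand side forces $c=0$. This route is more elementary and more transparent than the paper's averaging-plus-analytic-continuation step, and it makes explicit where the skew-normal structure enters (positivity of the real part of the characteristic function), at the cost of needing the closed form of that characteristic function one step earlier. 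From there the two proofs coincide: equality of real parts plus uniqueness of the Laplace transform gives $Q=\tilde Q$, and equality of imaginary parts plus strict monotonicity of the odd factor (the paper's $e(\cdot)$, your $T$) gives $\lambda=\tilde\lambda$.
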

\begin{proof}
A proof is given in the Appendix A. 
\end{proof}

The general consistency of structural parameters and latent distribution in semiparametric mixture distributions was established by \cite{kiefer1956consistency}. Based on their results, \cite{AFT_NGSM} demonstrated the consistency of AFT model with the nonparametric Gaussian scale mixture error distribution. 
Now, we specifically address the consistency of MLE denoted as $(\hat{b}_0, \hat{\betab}^{\top}, \hat{\lambda}, \hat{Q})$ for $(b_0, \betab^{\top}, \lambda, Q)$ in \eqref{aft_ssnsm}.
Let $g(\cdot)$ and $G(\cdot)$ denote the density and distribution functions of $\log C$, respectively. The consistency for the MLE of $(b_0, \betab^{\top}, \lambda, Q)$ is summarized in Theorem \ref{thm2}.
\begin{theorem}
    \label{thm2}
    Assume that the support of $Q$ is a subset of $[\ell, \infty)$ for a positive constant $\ell$ and $-\int_{-\infty}^{\infty} \log g(t) d G(t) < \infty$.    
    If $\int_{\ell}^{\infty} \log \sigma d Q(\sigma) < \infty$, then
    the MLE of $(b_0, \betab^{\top}, \lambda, Q)^{\top}$ is consistent estimator.
\end{theorem}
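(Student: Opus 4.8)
The plan is to establish consistency of the MLE $(\hat{b}_0, \hat{\betab}^{\top}, \hat{\lambda}, \hat{Q})$ by verifying the hypotheses of the general consistency theorem for semiparametric mixtures of \cite{kiefer1956consistency}, in the same spirit as the argument of \cite{AFT_NGSM} for the Gaussian scale mixture case. The structural parameter here is $\eta = (b_0, \betab^{\top}, \lambda)^{\top}$, ranging over a suitable (compact after reparametrization or one-point compactification) parameter space, and the mixing distribution $Q$ plays the role of the nuisance infinite-dimensional component, living on $[\ell,\infty)$. The observed-data density for a single observation $(y,\delta,\xb)$ is $f(\epsilon;\thetab,Q)^{\delta}\,S(\epsilon;\thetab,Q)^{1-\delta}$ times the censoring contribution, so the first task is to write down this density in a form amenable to the Kiefer--Wolfowitz conditions.

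First I would set up the framework: identify the per-observation log-likelihood contribution, absorb the covariate distribution and the censoring mechanism $(g,G)$ as fixed, and check the basic measurability/continuity requirements. The key Kiefer--Wolfowitz hypotheses to verify are (i) identifiability --- already supplied by Theorem \ref{thm1} --- so that the Kullback--Leibler divergence is uniquely minimized at the truth; (ii) upper semicontinuity of the density in the joint parameter $(\eta,Q)$ with respect to weak convergence of $Q$, which follows from dominated convergence since $\phi$ and $\Phi$ are bounded and continuous; and (iii) an integrable envelope / uniform integrability condition controlling $\log^{+}$ and $\log^{-}$ of the density uniformly over the parameter space. The condition $-\int \log g(t)\,dG(t) < \infty$ is precisely what handles the censored contributions $S(\epsilon)^{1-\delta}$ and the censoring density, while $\int_{\ell}^{\infty}\log\sigma\,dQ(\sigma)<\infty$ together with the support being bounded below by $\ell>0$ controls the behavior of $\frac{2}{\sigma}\phi(\epsilon/\sigma)\Phi(\lambda\epsilon/\sigma)$: boundedness below by $\ell$ prevents the density from blowing up (giving $f(\epsilon;\thetab,Q)\le \tfrac{2}{\ell}\|\phi\|_\infty$, hence an easy bound on $\log^{+}$), and the integrability of $\log\sigma$ controls $\log^{-}$ of the density, since for large $\sigma$ the factor $\frac{2}{\sigma}\phi(\epsilon/\sigma)\Phi(\lambda\epsilon/\sigma) \approx \frac{1}{\sigma\sqrt{2\pi}}$, contributing a $-\log\sigma$ term.

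The concrete steps, in order, would be: (1) reduce the compactness issue for $\eta$ --- show that the slant parameter $\lambda$ can be handled via the bounded reparametrization $\delta_\lambda = \lambda/\sqrt{1+\lambda^2}\in(-1,1)$ and that $b_0,\betab$ stay in a compact set by a standard argument (a divergent sequence of structural parameters forces the likelihood to $-\infty$, using that the covariate support contains an open set, as in Theorem \ref{thm1}); (2) establish the envelope bounds: produce a $G$- and covariate-integrable function dominating $\sup_{(\eta,Q)} |\log f(\epsilon;\eta,Q)^{\delta} S(\epsilon;\eta,Q)^{1-\delta}|$ from above, and a matching lower control using $\int\log\sigma\,dQ<\infty$ and $-\int\log g\,dG<\infty$; (3) verify upper semicontinuity in $(\eta,Q)$ under the weak topology on $Q$ via dominated convergence; (4) invoke the Wald-type argument of \cite{kiefer1956consistency}/\cite{AFT_NGSM} to conclude that any measurable sequence of MLEs has all its limit points at parameter values with the same KL divergence as the truth, and then apply identifiability (Theorem \ref{thm1}) to pin the limit down to the truth, yielding consistency (with $\hat{Q}\Rightarrow Q$ weakly).

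I expect the main obstacle to be step (2), the uniform integrability/envelope control --- specifically obtaining a clean uniform lower bound on $\log f(\epsilon;\eta,Q)$ that is integrable against the true data-generating distribution. The subtlety is that $\Phi(\lambda\epsilon/\sigma)$ can be arbitrarily close to $0$ when $\lambda\epsilon/\sigma \to -\infty$, making $\log f$ very negative; one must argue that this is compensated because the true density also carries a skew-normal $\Phi$ factor (so the ratio stays controlled on the relevant event), or alternatively restrict to the mixture representation and use that after integrating over $\sigma \ge \ell$ against $Q$, the worst-case $\Phi$ decay is tamed. Handling the censored observations simultaneously --- where $S(\epsilon;\eta,Q)$ rather than $f$ appears, and $S$ can approach $0$ --- adds a second layer requiring the $-\int\log g\,dG<\infty$ assumption to bound the contribution of the censoring tail. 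Once these envelope estimates are in hand, the remaining pieces are routine applications of the cited consistency machinery.
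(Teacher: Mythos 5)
Your overall route---verifying the Kiefer--Wolfowitz conditions, with identifiability supplied by Theorem \ref{thm1} and the two integrability hypotheses doing the analytic work---is the same as the paper's. However, your step (2) mis-locates the key technical point, and as stated it would fail. You ask for a two-sided, parameter-uniform envelope: an integrable function dominating $\sup_{(\eta,Q)}\bigl|\log f(\epsilon;\eta,Q)^{\delta}S(\epsilon;\eta,Q)^{1-\delta}\bigr|$, including the lower ($\log^{-}$) side. No such envelope exists: for fixed $\epsilon<0$ and $\lambda\to\infty$ the factor $\Phi(\lambda\epsilon/\sigma)$ tends to $0$, and $S(\epsilon;\eta,Q)$ can likewise be made arbitrarily small by translating $b_0$, so $\log$ of the density is unbounded below uniformly over the parameter space. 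More importantly, such a bound is not what the Kiefer--Wolfowitz machinery requires. The paper's proof exploits exactly the feature you only mention in passing: because the support of $Q$ lies in $[\ell,\infty)$, the observed-data density $p(z,\delta;\bm{b},\lambda,Q)=\{f(z;\bm{b},\lambda,Q)(1-G(z))\}^{\delta}\{g(z)S(z;\bm{b},\lambda,Q)\}^{1-\delta}$ is uniformly bounded above, which disposes of the $\log^{+}$ side of the integrability assumption wholesale; what remains is only to show $-E[\log p(Z,\delta;\bm{b},\lambda,Q)]<\infty$ \emph{at the true parameter}, not uniformly. Your anticipated ``main obstacle'' (taming the $\Phi$ decay uniformly, or comparing against the true density on bad events) is therefore a red herring.

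The finiteness at the truth is then obtained by a concrete decomposition you do not supply: split $-E[\log p]$ into four terms; the two cross terms (where $\log(1-G)$ is integrated against $f\cdot(1-G)$, and $\log S$ against $g\cdot S$) are each bounded by $e^{-1}$ via $0\le -z\log z\le e^{-1}$ for $0<z\le 1$; the uncensored entropy term $-\int \log\bigl(f(z;\bm{b},\lambda,Q)\bigr)f(z;\bm{b},\lambda,Q)(1-G(z))\,dz$ is finite because the skew-normal has a finite first moment, so $E[\log|Z|]^{+}<\infty$ under $\int_{\ell}^{\infty}\log\sigma\,dQ(\sigma)<\infty$, and the lemma in Section 2 of Kiefer--Wolfowitz converts this into finiteness of the differential entropy (the factor $1-G\le 1$ being harmless); the censored term is finite precisely by the assumption $-\int\log g(t)\,dG(t)<\infty$. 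Your proposal correctly identifies which assumption serves which purpose, but without the uniform upper bound observation and the entropy-lemma argument at the truth, the plan as written does not go through; conversely, your step (1) compactification of $(b_0,\betab,\lambda)$ is extra structure the paper's argument does not develop.
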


\begin{proof}
A proof is given in the Appendix B. 
\end{proof}

\begin{remark}
$\hat{\beta}_0 = \hat{b}_0 + \sqrt{\frac{2}{\pi}} \frac{\hat{\lambda}}{\sqrt{1 + \hat{\lambda}^2}} \int_{0}^{\infty} \sigma d \hat{Q}(\sigma)$ is also consistent estimator for $\beta_0$ due to the invariance property of the MLE. 
\end{remark}

\subsection{Estimation}

When considering random right censoring, the log-likelihood based on the AFT model with SSNSM error distribution \eqref{aft_ssnsm} can be expressed as
\begin{align}
\label{loglik_ssnsm}
\ell(\thetab, Q) = \sum_{i=1}^{n} \log  \int_{0}^{\infty} \{ f(\epsilon_i ; \thetab, \sigma) \}^{\delta_i} \{ S(\epsilon_i ; \thetab, \sigma) \}^{1-\delta_i} dQ(\sigma),
\end{align}
where $f(\epsilon_i ; \thetab, \sigma)$ and $S(\epsilon_i ; \thetab, \sigma)$ are the probability density function \eqref{sn} and survival function of the skew-normal distribution, respectively. 
The survival function of the skew-normal distribution can be expressed as
\begin{align*}
S(\epsilon ; \thetab, \sigma) = 1 - \Phi \Bigg ( \frac{\epsilon}{\sigma} \Bigg ) + 2 T \Bigg  ( \frac{\epsilon }{\sigma}, \lambda \Bigg ), 
\end{align*}
where $T(h, \lambda)$ is $T$ function \citep{owen1956tables}.

We present an algorithm that iteratively maximizes the log-likelihood function \eqref{loglik_ssnsm} to obtain the MLE of $(\thetab, Q)$.
First, to update $Q$ while keeping current estimate $\hat{\thetab}$ for $\thetab$ fixed, we maximize the function
\begin{align*}
\tilde{\ell}_n(Q) = \sum_{i=1}^{n} \log  \int_{0}^{\infty}  f(\epsilon_i ; \hat{\thetab}, \sigma) ^{\delta_i}  S(\epsilon_i ; \hat{\thetab}, \sigma) ^{1-\delta_i} dQ(\sigma).
\end{align*}
The directional derivative of $\tilde{\ell}_n$ at $Q$ toward $H_{\sigma^*}$ is given by
\begin{align}
D_{Q}(\sigma^*) & = \lim_{\alpha \to 0} \frac{\tilde\ell_{n}\{(1-\alpha)Q + \alpha H_{\sigma^*}\} - \tilde{\ell}_n(Q)}{\alpha} \notag  \\
        &= \sum_{i=1}^{n}  \frac{   f(\epsilon_i ; \hat{\thetab}, \sigma^{*}) ^{\delta_i}  S(\epsilon_i ; \hat{\thetab}, \sigma^{*}) ^{1-\delta_i} }{\int  f(\epsilon_i ; \hat{\thetab}, \sigma) ^{\delta_i}  S(\epsilon_i ; \hat{\thetab}, \sigma) ^{1-\delta_i} dQ(\sigma) } - n,  \label{D} 
\end{align}
where $\sigma^* \in \mathbb{R}^+$. 
The NPMLE $\hat{Q}$ is obtained using the CNM algorithm which is based on \eqref{D}. The CNM algorithm is specifically selected due to its efficiency in discarding unnecessary or improper support points, resulting in a faster and more practical estimation process compared to alternative algorithms.
The CNM algorithm can be summarized as follows.
\begin{Algorithm}
\label{alg1}
    \item[Step 1.] Identify all local maximizers of $D_Q(\sigma)$ using the current estimates. If all local maximizers of \eqref{D} are less than or equal to $0$, the algorithm terminates. Otherwise, the current support point set $S({Q^{(t)}})$ is updated by adding local maximizers to create a new support point set $S(Q^{(t+\frac{1}{2})})$ with $K^*$ elements.
    
    \item[Step 2.] Based on support point set $S(Q^{(t+\frac{1}{2})})$,  find $\alpha^{(t+\frac{1}{2})}$ as
    \begin{align*}
    {\bm{\alpha}}^{(t+\frac{1}{2})} = \arg\min_{\bm{{\alpha}}}  |\bm{{\alpha}}^{\top} \bm{1} - 1 |^2  + \gamma \| \bm{J} \bm{{\alpha}}  - \bm{2} \|_2^2,
    \end{align*}
    subject to $\bm{{\alpha}} \geq \bm{0}$ and $\gamma > 0$, by non-negative least squares algorithm (Lawson and Hanson, 1974). 
    Here, $\bm{J}$ is a $n \times K^*$ matrix with the $(i,k^*)th$ element defined as 
    \begin{align*}
        \frac{f(\epsilon_i ; \hat{\thetab}, \sigma_{k^*})^{\delta_i} S(\epsilon_i ; \hat{\thetab}, \sigma_{k^*})^{1-\delta_i} }{\sum_{k=1}^{K^*} \alpha_k f(\epsilon_i ; \hat{\thetab}, \sigma_k)^{\delta_i} S(\epsilon_i ; \hat{\thetab}, \sigma_k)^{1-\delta_i}},
    \end{align*}

    \item[Step 3.] Update the set of support points $S(Q^{(t+1)})$ and the weights $\bm{\alpha}^{(t+1)}$ by removing the support points with zero weight from $S(Q^{(t+\frac{1}{2})})$, then return to step 1.
    
\end{Algorithm}

Next, we update the parameter vector $\thetab$ by maximizing the function
\begin{align}
\label{profile}
\ell(\thetab, \hat{Q}) = \sum_{i=1}^{n} \log \int_{0}^{\infty} { f(\epsilon_i ; \thetab, \sigma) }^{\delta_i} { S(\epsilon_i ; \thetab, \sigma) }^{1-\delta_i} d\hat{Q}(\sigma),
\end{align}
while keeping the estimate $\hat{Q}$ fixed. Given that the NPMLE of $Q$ is known to inherently have a finite number of support points \citep{Lindsay_1995}, we can reasonably assume that the NPMLE $\hat{Q}$ has $K$ support points without loss of generality. Let $\hat{\sigmab} = (\hat{\sigma}_{1}, \hat{\sigma}_{2}, \ldots, \hat{\sigma}_{K})^{\top}$ and $\hat{\bm{\alpha}} = (\hat{\alpha}_{1}, \hat{\alpha}_{2}, \ldots, \hat{\alpha}_{K})^{\top}$ represent the support points and corresponding weights of $\hat{Q}$, respectively. Then, \eqref{profile} can be expressed as 
\begin{align}
\label{profile2}
\ell(\thetab, \hat{Q}) = \sum_{i=1}^{n} \log  \sum_{k=1}^{K} \hat{\alpha}_k  f(\epsilon_i ; \thetab, \hat{\sigma}_k) ^{\delta_i}  S(\epsilon_i ; \thetab, \hat{\sigma}_k)^{1-\delta_i} ,
\end{align}
where $\hat{\alpha}_k > 0$ for $k = 1, 2, \cdots, K$ and $\sum_{k=1}^{K} \hat{\alpha}_k = 1$. 
Because there is no closed form solutions for the maximizer of \eqref{profile2}, a numerical method is necessary. In this paper, the Broyden-Fletcher-Goldfarb-Shanno (BFGS) algorithm (\citealp{broyden1970convergence}; \citealp{fletcher1970new}; \citealp{goldfarb1970family}; \citealp{shanno1970conditioning}) is employed to estimate the parameter $\boldsymbol{\theta}$ in \eqref{profile2} as it offers low computational complexity compared to the Newton-Raphson method.  Let $h(\boldsymbol{\theta})$ and $\bm{B}$ represent $-\ell(\boldsymbol{\theta}, \hat{Q})$ and the approximation of the Hessian matrix for $h(\boldsymbol{\theta})$, respectively. Then, this algorithm can be summarized as follows.
\begin{Algorithm}
\label{alg2}
   
    \item[Step 1.] Update $\boldsymbol{\theta}^{(t)}$ to $\thetab^{(t+1)}$ as    
    \begin{align*}
\boldsymbol{\theta}^{(t+1)} = \boldsymbol{\theta}^{(t)} - c \bm{a}^{(t)},
\end{align*}
where $\bm{a}^{(t)}$ is obtained by solving the equation $\bm{B}^{(t)} \bm{a}^{(t)} = - \nabla h(\boldsymbol{\theta}^{(t)})$, and $c$ is a positive value representing the step size. If the algorithm has converged, then it is stopped. Otherwise, proceed to step 2.

    \item[Step 2.] Update the approximation of the Hessian matrix $\bm{B}^{(t+1)}$ as
    \begin{align*}
        \bm{B}^{(t+1)} = \bm{B}^{(t)} - \frac{\bm{B}^{(t)} \bm{a}^{(t)} {\bm{a}^{(t)}}^{\top} \bm{B}^{(t)} }{{\bm{a}^{(t)}}^{\top} \bm{B}^{(t)} \bm{a}^{(t)} } + \frac{\bm{d}^{(t)} {\bm{d}^{(t)}}^{\top}}{{\bm{d}^{(t)}}^{\top} \bm{a}^{(t)}},
    \end{align*}
    where $\bm{d}^{(t)} = \nabla h(\boldsymbol{\theta}^{(t+1)}) - \nabla h(\boldsymbol{\theta}^{(t)})$, and return to step 1.
    
\end{Algorithm}

The alternating repetition between Algorithm \ref{alg1} and \ref{alg2} continues until there is no further increase in the log-likelihood. Subsequently, $\hat{\beta}_0$ is updated based on $\hat{b}_0$, $\hat{\lambda}$, and $\hat{Q}$.
In summary, the algorithm for estimating $(\thetab, Q)$ in \eqref{loglik_ssnsm} can be summarized as follows.
\begin{Algorithm}
\label{alg3}

    \item[step 1.] For fixed $\thetab$, find $Q$ using CNM algorithm.
    \item[step 2.] For fixed $Q$, find $\thetab$ using BFGS algorithm.
    \item[step 3.] Repeat step 1 and step 2 until the log-likelihood \eqref{loglik_ssnsm} does not increase.
    \item[step 4.] Update $\beta_0$ based on $b_0$, $\lambda$ and $Q$.

\end{Algorithm}

\section{Simulation studies}
\label{sec5}

In this section, we conduct a simulation study to evaluate the performance of the proposed method in comparison to other estimation methods. For each simulated sample, we employ the following estimators.
\begin{enumerate}
\item Normal: MLE assuming the normally distributed error.
\item SN: MLE assuming the skew-normal error.
\item Gehan:  Induced-smoothed rank-based estimator with Gehan-type weight.
\item GEE: Generalized estimating equation estimator.
\item SSNSM: MLE assuming the SSNSM error.
\end{enumerate}
To fit the Normal, we use the $\texttt{survival}$ package in \textsf{R} \citep{therneau2019package}. The $\texttt{aftgee}$ package in \textsf{R} \citep{chiou2014fitting} is utilized for fitting the Gehan and GEE.  
Additionally, we also develop a \textsf{R} program for fitting the SN and SSNSM.

\subsection{Simulation 1}
\label{sec5-1}

In the first simulation study, we assess the performance of the proposed method in comparison to other methods under various scenarios involving different error distributions, sample sizes and censoring proportions.
The log-transformed failure times $\log T_i$ ($i = 1,2, \cdots, n$) are generated according to the relationship as $\log T = 2 + X_1 - X_2 + \epsilon$, where $X_1$ is generated from a standard normal distribution and $X_2$ is generated from a Bernoulli distribution with a success probability of 0.5.
The log-transformed censoring times $\log C_i$ ($i = 1,2, \cdots, n$) are generated from a uniform distribution with a minimum of 0 and a maximum of $\tau$. Specifically, we use $\log C \sim U(0, \tau)$, where $\tau$ takes on the values of 1.5 or 4. These values correspond to censoring proportions of approximately $0.6$ to $0.8$ and $0.3$ to $0.5$, respectively.

We consider two sample sizes, $n=200$ and $n=400$, and explore the impact of four distinct error distributions for $\epsilon$. These error distributions include the standard normal distribution, Student's t distribution with 3 degrees of freedom, Gumbel distribution with location 0 and scale parameter 5, and skew-t distribution with location 0, scale 1, slant parameter -15 and 3 degrees of freedom.
Note that the Gumbel distribution does not fall within the class of skew-normal scale mixture distributions.
To ensure comparability among the distributions, we standardize them to have a zero mean and a unit variance. Additionally, in each simulation, we assume that the observation with the maximum among $\epsilon_i$, where $i = 1, 2, \ldots, n$, is always observed. 

The performance of each estimation method is assessed by computing the mean squared error (MSE) and bias for each parameter across $200$ replicated samples. The MSE and bias are calculated as follows.
\begin{align*}
MSE = \frac{1}{200} \sum_{j = 1}^{200} (\hat{\beta}_{k(j)} - {\beta}_{k})^2, \ \  bias = \frac{1}{200} \sum_{j = 1}^{200} (\hat{\beta}_{k(j)} - {\beta}_{k}), 
\end{align*}
where $\beta_{k}$ and $\hat{\beta}_{k(j)}$ are the $k$th true regression coefficient and the estimate of the $\beta_{k}$ from the $j$th sample for $k = 0,1,2$ and $j = 1,2,\ldots, 200$, respectively. 
Note that the values of $\beta_0$, $\beta_1$ and $\beta_2$ are 2, 1 and -1, respectively.
By examining the performance under the different settings, we can gain insights into the robustness and effectiveness of the proposed method for estimating the parameters in the AFT model.

\begin{table}[p]
\caption{MSE and bias when $n=200$ with $0.3$ to $0.5$ censoring rate (Boldfaced numbers indicate the smallest value in each method)}
\label{tab:n200,c30} 
	\centering
	\begin{tabular}{ c  c  c   c    c   c  } 
        \hline  \hline \noalign{\smallskip}
	\multirow{2}{*}{Method}	&
	\multirow{2}{*}{}	&
	$\text{N}(0,1)$ & $\text{t}(3)$ & $\text{Gumbel}(0,5)$ & $\text{skew-t}(0,1,-15,3)$ \\
 & & MSE  (bias) & MSE  (bias) & MSE  (bias) & MSE  (bias) \\ \hline
    \multirow{3}{*}{Normal} & $\beta_0$ &  $\boldsymbol{0.0140}$  (0.0165) & 0.0306  (0.0961) & 0.0124  (0.0225) & 0.0455  (0.1682) \\
                            & $\beta_1$ &  0.0082  (0.0091) & 0.0103  (0.0316) & 0.0074  (0.0224) & 0.0150   (0.0490) \\
                            & $\beta_2$ & 0.0264  (0.0021) & 0.0339  (-0.0423) & 0.0313  (-0.0417) & 0.0446  (-0.0777) \\
\hline
    \multirow{3}{*}{SN} & $\beta_0$ &  0.0145 (0.0215) & 0.0283  (0.0785) & 0.0100 (0.0442) & 0.0249 (-0.1243) \\
                            & $\beta_1$ & 0.0082  (0.0076) & 0.0096 (0.0305) & $\boldsymbol{0.0047}$ (0.0072) & 0.0048 (0.0412) \\
                            & $\beta_2$ & 0.0270 (0.0034) & 0.0299 (-0.0335) & $\boldsymbol{0.0181}$ (-0.0185) & 0.0132 (-0.0422) \\
\hline
    \multirow{3}{*}{Gehan} & $\beta_0$ &  0.0152  (0.0250) & 0.0229  (0.0647) & 0.0153  (0.0705) & 0.0092  (0.0216) \\
                           & $\beta_1$ &  $\boldsymbol{0.0080}$  (0.0039) & $\boldsymbol{0.0053}$  (0.0021) & 0.0049  (0.0106) & 0.0061  (0.0076) \\
                           & $\beta_2$ & 0.0270  (0.0027) & $\boldsymbol{0.0201}$ (-0.0029) & 0.0214  (-0.0203) & 0.0207   (-0.0229) \\
\hline
    \multirow{3}{*}{GEE} & $\beta_0$ &  0.0145  (0.0258) & 0.0298  (0.0733) & 0.0207  (0.0827) & 0.0185  (0.0319) \\
                         & $\beta_1$ &  0.0083  (0.0096) & 0.0090 (0.0107) & 0.0075  (0.0228) & 0.0140  (0.0045) \\
                         & $\beta_2$ & $\boldsymbol{0.0261}$  (0.0017) & 0.0330  (-0.0174) & 0.0331  (-0.0203) & 0.0427   (-0.0380) \\
          \hline
    \multirow{3}{*}{SSNSM} & $\beta_0$ &  0.0159 (0.0183) & $\boldsymbol{0.0212}$  (0.0305) & $\boldsymbol{0.0100}$  (0.0368) & $\boldsymbol{0.0060}$  (0.0281) \\
                           & $\beta_1$ &  0.0084  (0.0052)  & 0.0058  (0.0008) & 0.0051  (0.0036) & $\boldsymbol{0.0035}$  (0.0176) \\
                           & $\beta_2$ & 0.0295  (0.0042) & 0.0215  (-0.0001) & 0.0190  (-0.0155) & $\boldsymbol{0.0118}$   (-0.0125) \\
		\noalign{\smallskip}\hline\noalign{\smallskip}
	\end{tabular}
\end{table}

\begin{table}[p]
\caption{MSE and bias when $n=200$ with $0.6$ to $0.8$ censoring rate (Boldfaced numbers indicate the smallest value in each method)}
\label{tab:n200,c60} 
	\centering
	\begin{tabular}{ c  c  c   c    c   c  } 
        \hline \hline \noalign{\smallskip}
	\multirow{2}{*}{Method}	&
	\multirow{2}{*}{}	&
	$\text{N}(0,1)$ & $\text{t}(3)$ & $\text{Gumbel}(0,5)$ & $\text{skew-t}(0,1,-15,3)$ \\
 & & MSE  (bias) & MSE  (bias) & MSE  (bias) & MSE  (bias) \\ \hline
    \multirow{3}{*}{Normal} & $\beta_0$ & $\boldsymbol{0.0482}$ (0.1146) & 0.4601 (0.5456) & 0.0823 (0.1682) & 0.8246 (0.7750) \\
                            & $\beta_1$ &  0.0169 (0.0449) & 0.0847 (0.1931) & 0.0270 (0.1034) & 0.1153  (0.2148) \\
                            & $\beta_2$ & 0.0463 (-0.0437) & 0.1475 (-0.2346) & 0.0699 (-0.1165) & 0.2208 (-0.2734) \\
\hline
    \multirow{3}{*}{SN} & $\beta_0$ & 0.0574  (0.1470) & 0.3467 (0.4781) & $\boldsymbol{0.0468}$ (0.1323) & 0.2299 (0.3527) \\
                            & $\beta_1$ & 0.0162  (0.0291) & 0.0727 (0.1716) & 0.0103 (0.0080) & 0.1037 (0.2606) \\
                            & $\beta_2$ & 0.0447 (-0.0253) & 0.1345 (-0.2064) & $\boldsymbol{0.0294}$ (-0.0109) & 0.1863 (-0.3075) \\
\hline
    \multirow{3}{*}{Gehan} & $\beta_0$ &  0.0690 (0.1711) & 0.1760 (0.3020) & 0.1354 (0.2769) & 0.1347 (0.1150) \\
                           & $\beta_1$ &  $\boldsymbol{0.0143}$ (0.0105) & $\boldsymbol{0.0197}$ (0.0068) & $\boldsymbol{0.0096}$ (0.0096) & 0.0452 (0.0270) \\
                           & $\beta_2$ & $\boldsymbol{0.0418}$ (-0.0069) & $\boldsymbol{0.0509}$ (-0.0277) & 0.0311 (-0.0120) & 0.1182 (-0.0798) \\
\hline
    \multirow{3}{*}{GEE} & $\beta_0$ &  0.0753 (0.1931) & 0.2668 (0.3800) & 0.1593 (0.3105) & 0.2380 (0.1665) \\
                          & $\beta_1$ & 0.0165 (0.0394) & 0.0470 (0.0847) & 0.0193 (0.0606) & 0.0733 (0.0399) \\
                          & $\beta_2$ & 0.0449 (-0.0340) & 0.0949  (-0.1163) & 0.0543  (-0.0619) & 0.1839   (-0.1132) \\
          \hline
    \multirow{3}{*}{SSNSM} & $\beta_0$ &  0.0645 (0.1286) & $\boldsymbol{0.1098}$  (0.1790) & 0.0632  (0.1511) & $\boldsymbol{0.0613}$ (0.1332) \\
                           & $\beta_1$ &  0.0188 (0.0139) & 0.0248 (0.0259) & 0.0105 (-0.0000) & $\boldsymbol{0.0194}$ (0.0807)\\
                           & $\beta_2$ & 0.0519 (-0.0216) & 0.0612 (-0.0533) & 0.0310 (0.0016) & $\boldsymbol{0.0545}$  (-0.0704) \\
		\noalign{\smallskip}\hline\noalign{\smallskip}
	\end{tabular}
\end{table}

\begin{table}[p]
\caption{MSE and bias when $n=400$ with $0.3$ to $0.5$ censoring rate (Boldfaced numbers indicate the smallest value in each method)}
\label{tab:n400,c30} 
	\centering
	\begin{tabular}{ c  c  c   c    c   c  } 
        \hline \hline \noalign{\smallskip}
	\multirow{2}{*}{Method}	&
	\multirow{2}{*}{}	&
	$\text{N}(0,1)$ & $\text{t}(3)$ & $\text{Gumbel}(0,5)$ & $\text{skew-t}(0,1,-15,3)$ \\
 & & MSE  (Bias) & MSE  (Bias) & MSE  (Bias) & MSE  (Bias) \\ \hline
    \multirow{3}{*}{Normal} & $\beta_0$ &  $\boldsymbol{0.0083}$ (0.0181) & 0.0162 (0.0802) & 0.0043 (-0.0127) & 0.0382 (0.1653) \\
                            & $\beta_1$ &  $\boldsymbol{0.0039}$ (0.0048) & 0.0066 (0.0363) & 0.0031 (0.0097) & 0.0090 (0.0515) \\
                            & $\beta_2$ & 0.0133 (-0.0065) & 0.0157 (-0.0294) & 0.0108 (-0.0045) & 0.0257 (-0.0564) \\
\hline
    \multirow{3}{*}{SN} & $\beta_0$ & 0.0086  (0.0227) & 0.0140 (0.0555) & $\boldsymbol{0.0037}$ (0.0173) & 0.0272 (-0.1464) \\
                            & $\beta_1$ & 0.0038  (0.0038) & 0.0053 (0.0284) & $\boldsymbol{0.0021}$ (-0.0010) & 0.0027 (0.0351) \\
                            & $\beta_2$ & 0.0134 (-0.0071) & 0.0130 (-0.0219) & $\boldsymbol{0.0071}$ (0.0034) & 0.0068 (-0.0379) \\
\hline
    \multirow{3}{*}{Gehan} & $\beta_0$ &  0.0087 (0.0230) & 0.0118 (0.0541) & 0.0068 (0.0429) & 0.0041 (0.0033) \\
                           & $\beta_1$ &  0.0040 (0.0016) & $\boldsymbol{0.0026}$ (0.0003) & 0.0024 (0.0015) & 0.0027 (0.0010) \\
                           & $\beta_2$ & $\boldsymbol{0.0133}$ (-0.0032) & $\boldsymbol{0.0086}$ (-0.0007) & 0.0084 (0.0059) & 0.0096  (0.0007) \\
\hline
    \multirow{3}{*}{GEE} & $\beta_0$ &  0.0088 (0.0251) & 0.0149 (0.0608) & 0.0084 (0.0503) & 0.0097 (0.0088) \\
                         & $\beta_1$ &  0.0039 (0.0053) & 0.0050 (0.0151) & 0.0034 (0.0131) & 0.0058 (0.0003)\\
                         & $\beta_2$ & 0.0134 (-0.0050) & 0.0142 (-0.0084) & 0.0119 (-0.0071) & 0.0240  (-0.0066) \\                          
          \hline
    \multirow{3}{*}{SSNSM} & $\beta_0$ &  0.0088 (0.0173) & $\boldsymbol{0.0074}$ (0.0138) & 0.0041 (0.0151) & $\boldsymbol{0.0028}$ (0.0217) \\
                           & $\beta_1$ &  0.0043 (0.0008)  & 0.0028 (0.0020) & 0.0023 (-0.0039) & $\boldsymbol{0.0013}$ (0.0090)\\
                           & $\beta_2$ & 0.0142 (-0.0050) & 0.0089 (-0.0021)  & 0.0073 (0.0068) & $\boldsymbol{0.0046}$ (-0.0163) \\
		\noalign{\smallskip}\hline\noalign{\smallskip}
	\end{tabular}
\end{table}

\begin{table}[p]
\caption{MSE and bias when $n=400$ with $0.6$ to $0.8$ censoring rate (Boldfaced numbers indicate the smallest value in each method)}
\label{tab:n400,c60} 
	\centering
	\begin{tabular}{ c  c  c   c    c   c  } 
        \hline \hline \noalign{\smallskip}
	\multirow{2}{*}{Method}	&
	\multirow{2}{*}{}	&
	$\text{N}(0,1)$ & $\text{t}(3)$ & $\text{Gumbel}(0,5)$ & $\text{skew-t}(0,1,-15,3)$ \\
 & & MSE  (Bias) & MSE  (Bias) & MSE  (Bias) & MSE  (Bias) \\ \hline
    \multirow{3}{*}{Normal} & $\beta_0$ &  $\boldsymbol{0.0297}$ (0.0673) & 0.3596 (0.5191) & 0.0263 (0.0753) & 0.7383 (0.7838) \\
                            & $\beta_1$ &  0.0091 (0.0273) & 0.0621 (0.2021) & 0.0130 (0.0706) & 0.0894  (0.2310) \\
                            & $\beta_2$ & 0.0230 (-0.0194) & 0.0943 (-0.2158) & 0.0264 (-0.0767) & 0.1233 (-0.2158) \\
\hline
    \multirow{3}{*}{SN} & $\beta_0$ &  0.0348 (0.0965) & 0.2712 (0.4601) & $\boldsymbol{0.0183}$ (0.0762) & 0.1663 (0.3416) \\
                            & $\beta_1$ & 0.0085  (0.0179) & 0.0561 (0.1882) & 0.0042 (0.0038) & 0.0892 (0.2705) \\
                            & $\beta_2$ & 0.0220 (-0.0114) & 0.0874 (-0.2050) & $\boldsymbol{0.0122}$ (-0.0054) & 0.1183 (-0.2674) \\
\hline
    \multirow{3}{*}{Gehan} & $\beta_0$ &  0.0447 (0.1236) & 0.1493 (0.3015) & 0.0942 (0.2373) & 0.0423 (0.0288) \\
                           & $\beta_1$ &  $\boldsymbol{0.0080}$ (0.0053) & 0.0104 (0.0184) & $\boldsymbol{0.0042}$ (0.0026) & 0.0182 (0.0083) \\
                           & $\beta_2$ & $\boldsymbol{0.0219}$ (0.0017) & 0.0220 (-0.0194) & 0.0129 (0.0004) & 0.0417 (0.0024) \\
\hline
    \multirow{3}{*}{GEE} & $\beta_0$ & 0.0483 (0.1354) & 0.1825 (0.3444) & 0.1058 (0.2624) & 0.0746 (0.0342) \\
                         & $\beta_1$ &  0.0089 (0.0238) & 0.0208 (0.0728) & 0.0090 (0.0442) & 0.0302 (0.0093) \\
                         & $\beta_2$ & 0.0230 (-0.0123) & 0.0415 (-0.0763) & 0.0211 (-0.0374) & 0.0742  (0.0097) \\
          \hline
    \multirow{3}{*}{SSNSM} & $\beta_0$ &  0.0337 (0.0703) & $\boldsymbol{0.0346}$ (0.0948) & 0.0211 (0.0703) & $\boldsymbol{0.0281}$ (0.1101) \\
                           & $\beta_1$ &  0.0090 (0.0095)  & $\boldsymbol{0.0093}$ (0.0251) & 0.0044 (-0.0071) & $\boldsymbol{0.0110}$ (0.0689) \\
                           & $\beta_2$ & 0.0245 (-0.0015) & $\boldsymbol{0.0217}$ (-0.0324) & 0.0127 (0.0063) & $\boldsymbol{0.0233}$ (-0.0515) \\
		\noalign{\smallskip}\hline\noalign{\smallskip}
	\end{tabular}
\end{table}

The results, including MSE and bias for each method, are presented in Tables \ref{tab:n200,c30} - \ref{tab:n400,c60}.
From Tables \ref{tab:n200,c30} - \ref{tab:n400,c60}, when the normally distributed error is considered,
all estimation methods seem to provide similar performance, regardless of the censoring rates or sample sizes. However, the performance of the Normal estimator get worse when the error distributions deviate from the normality due to the violation of parametric assumptions.
In Tables \ref{tab:n200,c30} - \ref{tab:n400,c30}, when the true error distribution is $t(3)$, the Gehan estimators perform the best in estimating regression coefficients, with the exception of the intercept term. Interestingly, the SSNSM method demonstrates comparable performance to Gehan and emerges as the preferred choice for estimating the intercept term. However, in Table \ref{tab:n400,c60}, SSNSM performs the best for all coefficients.

Under Gumbel error distributions, SN exhibits the most accurate estimation for nearly all regression coefficients, closely followed by SSNSM. The efficiency of the SSNSM estimator remains high even under a Gumbel error distribution, which does not fall within the class of skew-normal scale mixtures. 
When the error distribution is $\text{skew-t}(0,1,-15,3)$, the SSNSM estimator demonstrates superior performance compared to other estimation methods. In general, the SSNSM yields results that are close to the best when the true error distribution is normal, $\text{t}$ and Gumbel, while it performs the best when the true error distribution is $\text{skew-t}$. The SSNSM method showcases overall superiority in handling a wide range of error distributions, exhibiting robustness and efficiency in parameter estimation, particularly in cases involving heavy-tailed distributions regardless of skewness.

\subsection{Simulation 2}   
\label{sec5-2}

In this subsection, we investigate the accuracy of failure time predictions for each method within the contexts outlined in Section \ref{sec5-1}. 
For the training dataset, we consider a sample size of 250, while for the test dataset, the sample size is $m = 1000$. To calculate the predicted failure times, we employ the covariate information from the test dataset and the regression coefficient estimates derived from the training dataset. Specifically, we compute the root mean squared error of predictions ($RMSEP$) as 
\begin{align*}
\text{RMSEP}(j) = \sqrt{ \frac{1}{m} \sum_{i = 1}^{m} (\log T_{ij}  - \bm{z}_{ij}^{\top} \hat{{\betab}}_{j} )^2}, 
\end{align*}
where $T_{ij}$ represents the true failure time for the $i$th observation in the $j$th test dataset, $\bm{z}_{ij}^{\top}$ denotes $(1, \xb_{ij}^{\top})^{\top}$, where $\xb_{ij}$ is the $i$th covariate vector of the $j$th test dataset, and $\hat{\betab}_j$ is the vector of estimated regression coefficients from the $j$th training dataset.
The boxplots of $\text{RMSEP}(j)$, $j=1,2,\ldots,200$, are presented in Figures 1 and 2 based on 200 replications.

\begin{figure}[p] 
\centering
\subfigure[$\text{N}(0,1)$]{
\includegraphics[width=0.475\linewidth]{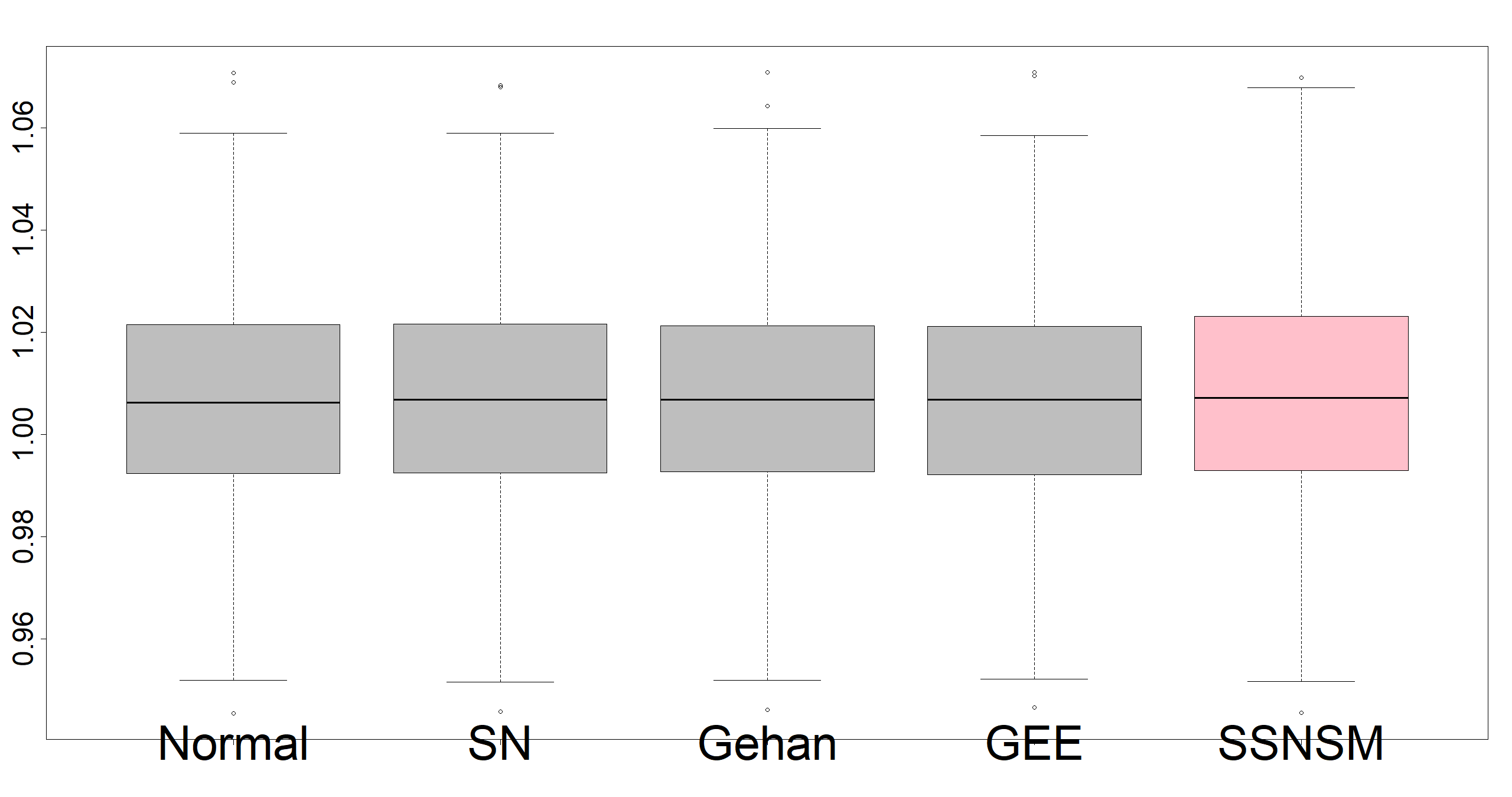}
}
\centering
\subfigure[$\text{t}(3)$]{
\includegraphics[width=0.475\linewidth]{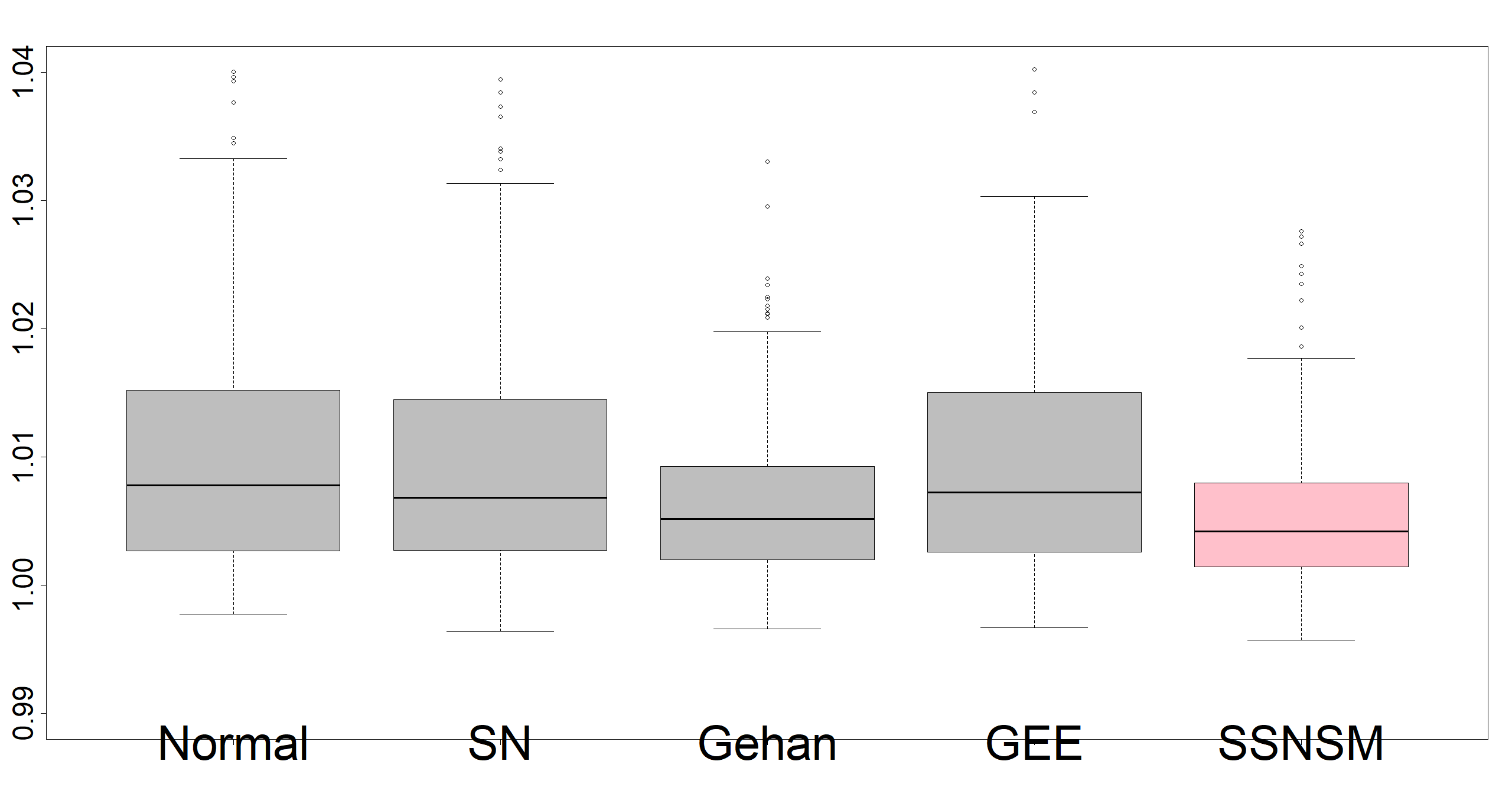}
}
\centering
\subfigure[$\text{Gumbel}(0,5)$]{
\includegraphics[width=0.475\linewidth]{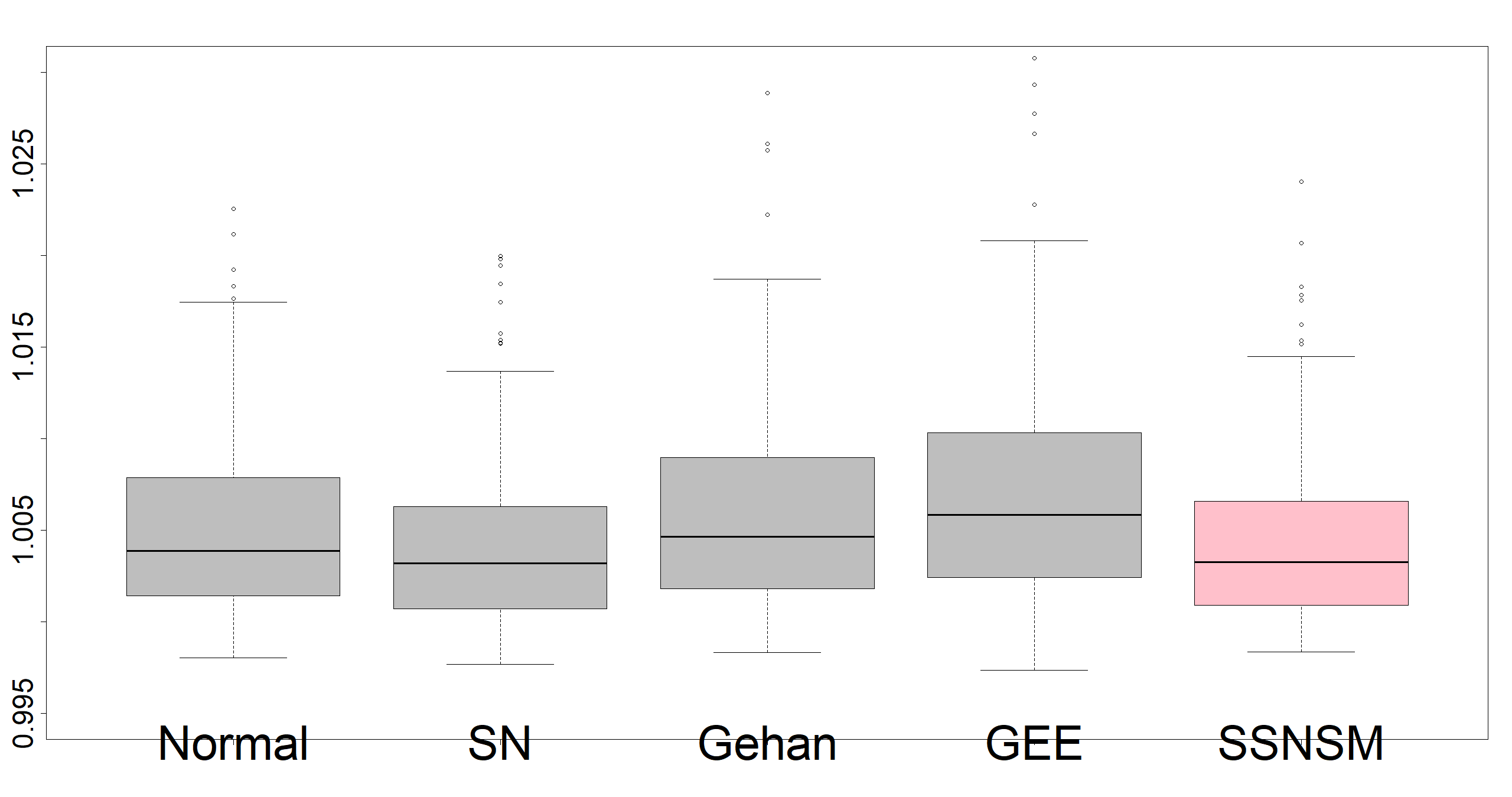}
}
\centering
\subfigure[$\text{skew-t}(0,1,-15,3)$]{
\includegraphics[width=0.475\linewidth]{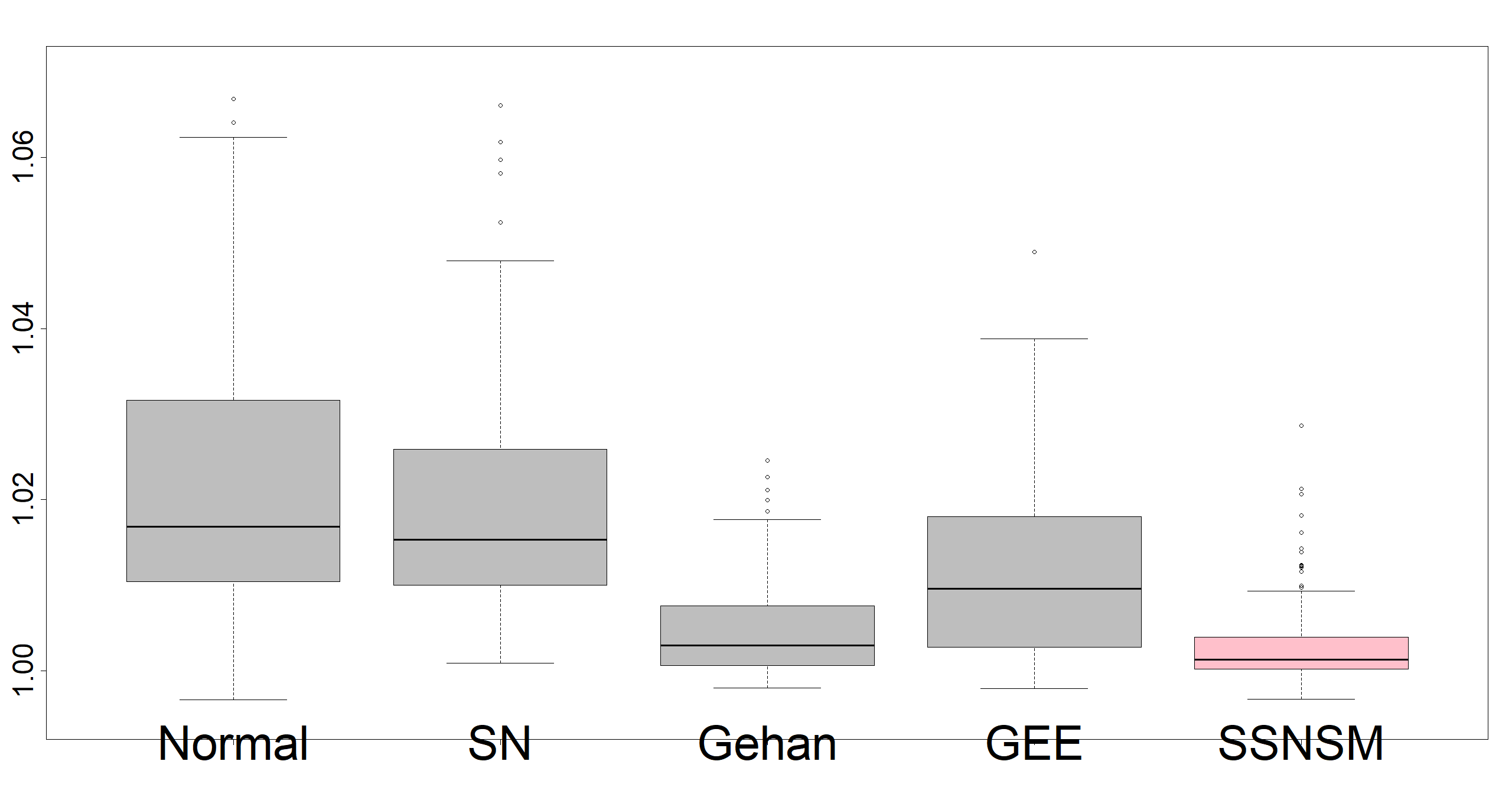}
}
\label{prdiction30}
\caption{Comparison of prediction performance with box plots of $\text{RMSEP}$ in each case when censoring rate is 0.3 to 0.5}
\end{figure}

\begin{figure}[p] 
\label{prdiction60}
\centering
\subfigure[$\text{N}(0,1)$]{
\includegraphics[width=0.475\linewidth]{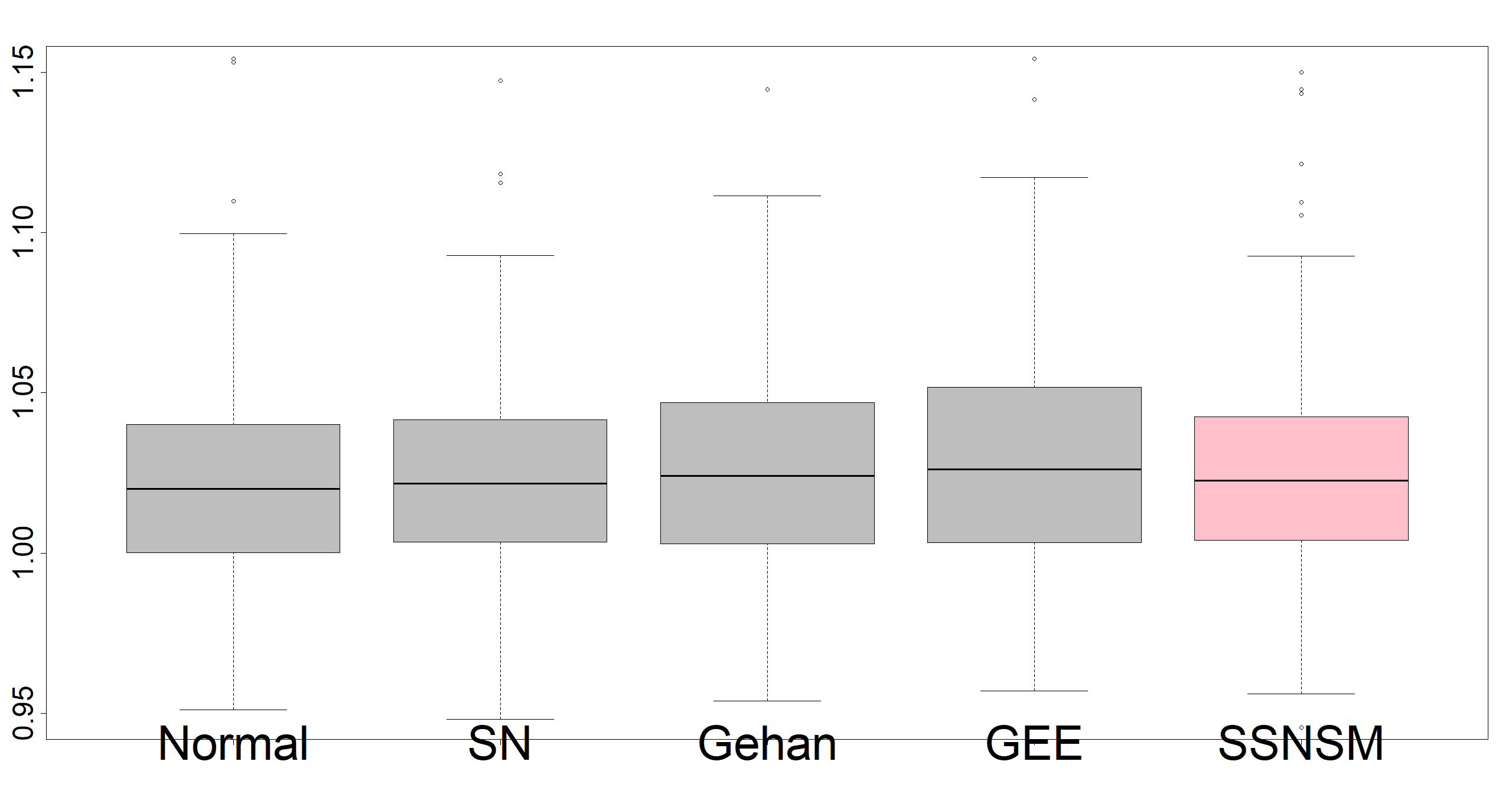}
}
\centering
\subfigure[$\text{t}(3)$]{
\includegraphics[width=0.475\linewidth]{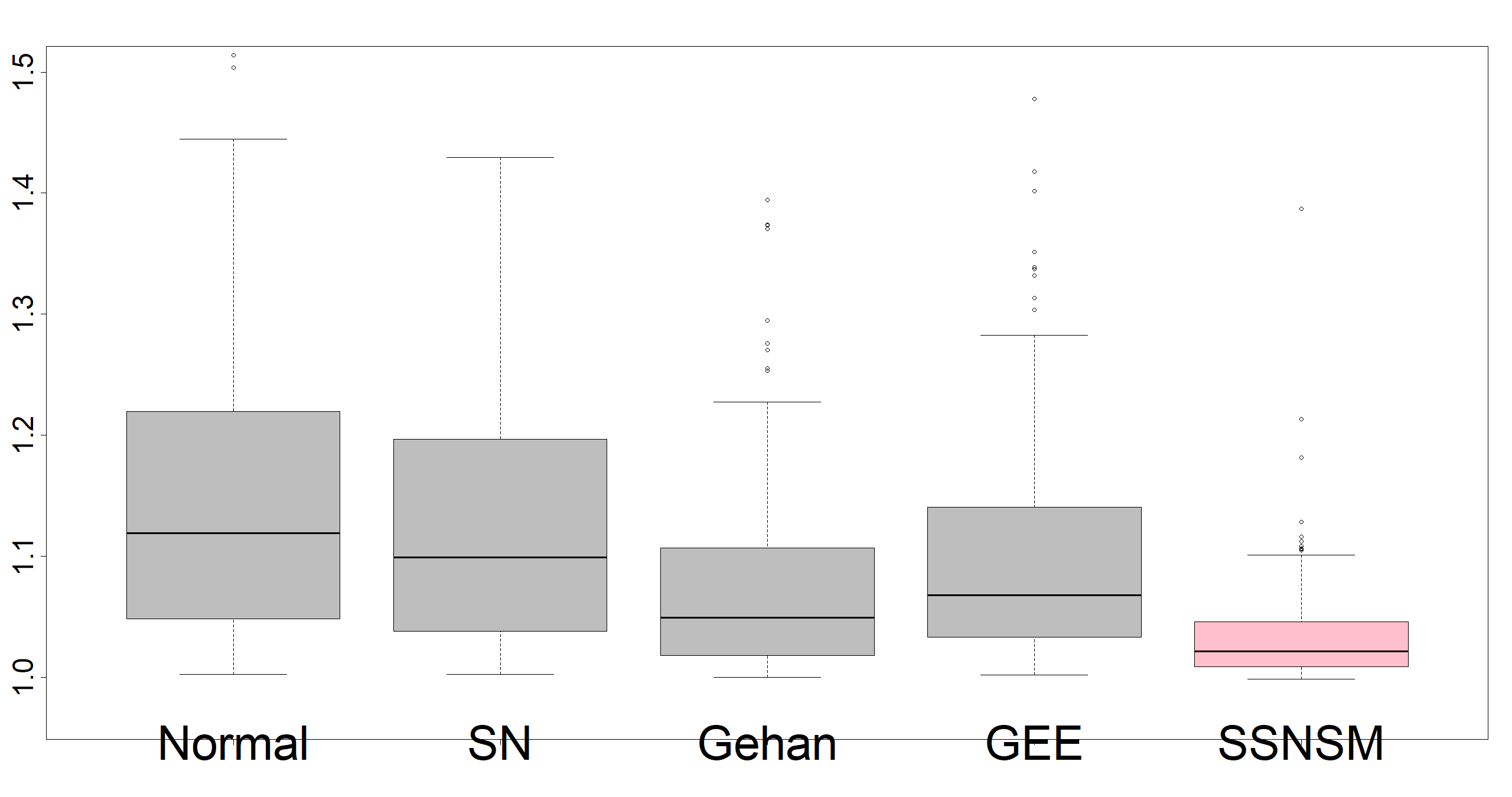}
}
\centering
\subfigure[$\text{Gumbel}(0,5)$]{
\includegraphics[width=0.475\linewidth]{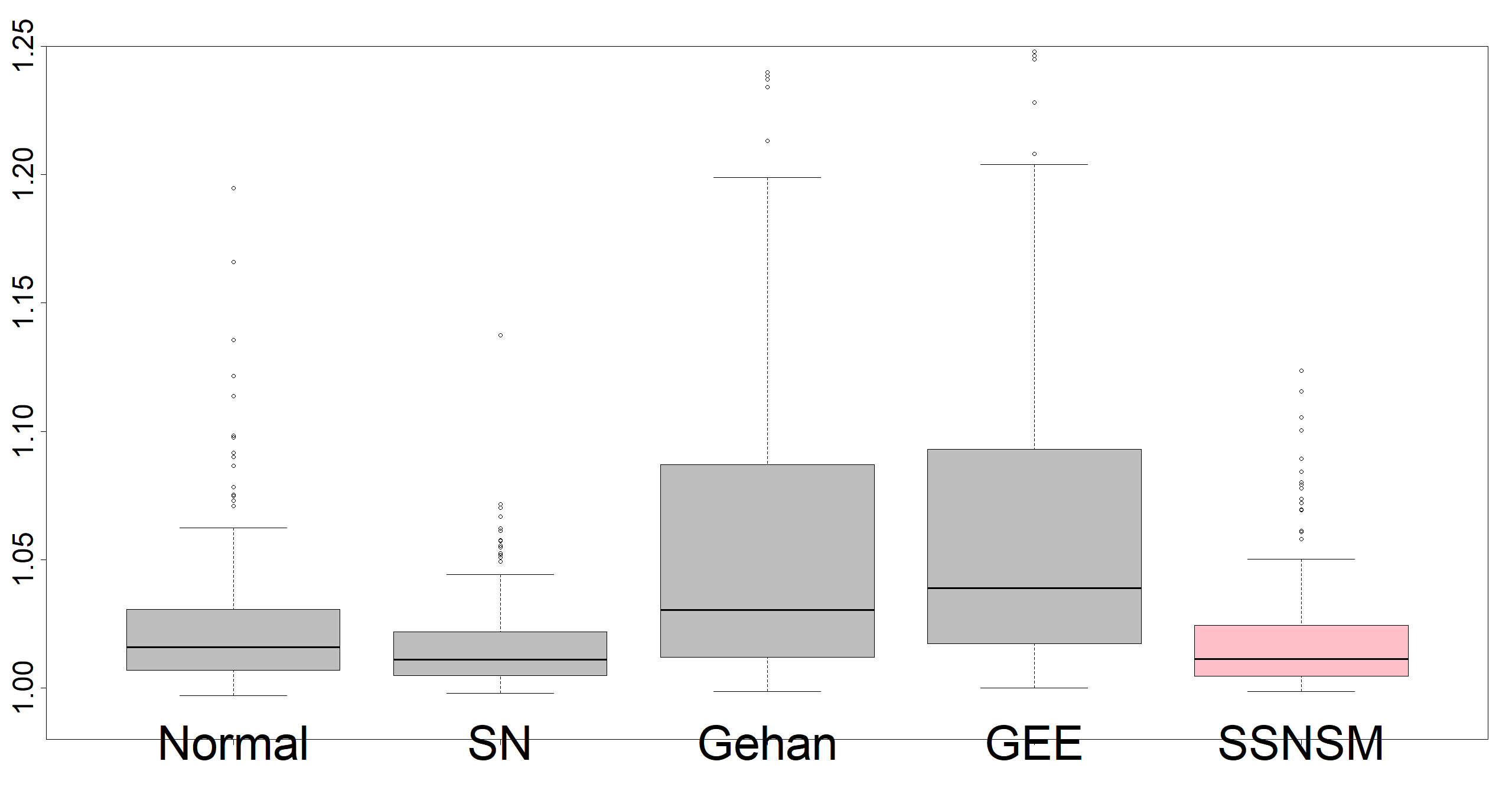}
}
\centering
\subfigure[$\text{skew-t}(0,1,-15,3)$]{
\includegraphics[width=0.475\linewidth]{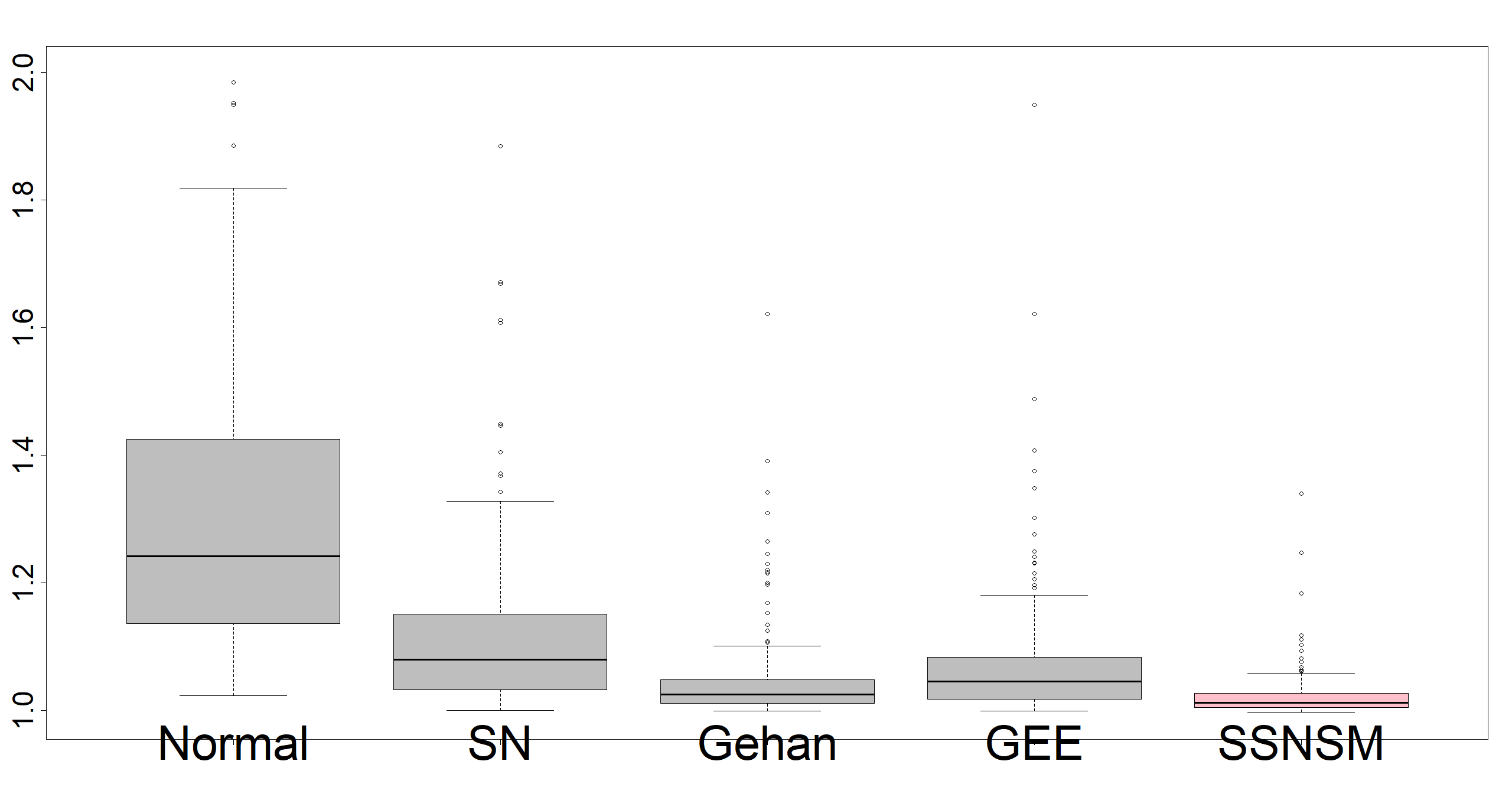}
}
\caption{Comparison of prediction performance with box plots of $\text{RMSEP}$ in each case when censoring rate is 0.6 to 0.8}
\end{figure}

When errors follow to the normal distribution, the Normal method exhibits superior performance, while the SSNSM method exhibits slightly lower performance in terms of the interquartile range when the censoring rate is between 0.3 and 0.5. However, it ranks as the second-best performer when the censoring rate ranges from 0.6 to 0.8.
For errors following to a t-distribution, SSNSM outperforms other methods, closely followed by Gehan. 

For errors following the Gumbel distribution, SN can be considered as the most effective estimation method, closely followed by SSNSM. In the case of errors following the skew-t distribution, SSNSM outperforms other methods, with Gehan being the closest competitor. Note that Gehan and GEE exhibit significant performance degradation as the censoring rate increases when errors follow the Gumbel distribution. Furthermore, the Normal method experiences a substantial decline in performance when errors follow the heavy tailed distributions. 
On the other hand, the SSNSM performs the best in almost all cases. 
These observed patterns in prediction outcomes seem to be linked to the estimated intercept term $\beta_0$ in Section \ref{sec5-1}.

\subsection{Simulation 3}
\label{sec5-3}

In the third simulation study, we aim to examine the performance of different methods in handling data with increasing levels of skewness and the presence of outliers. To achieve this, we generate data using a skew-t distribution with $3$ degrees of freedom and vary the slant parameters across different scenarios ($\lambda = -1, -4, -10, -50$) as shown in Figure 3. By manipulating the slant parameters, we can control the level of skewness in the data, with higher values indicating greater skewness. The sample size for this study is set to $n=200$ and $n=400$, while the remaining settings remain consistent with those of the first simulation study.

\begin{figure}[ht] 
\centering
\includegraphics[width=0.45\linewidth]{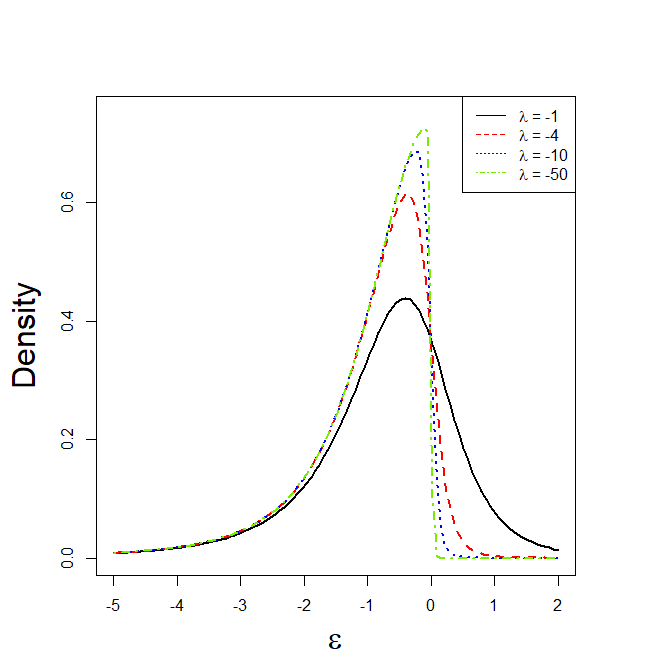}
\caption{Density of $\epsilon$ from skew-t distribution across various $\lambda$}
\label{true_error}
\end{figure}

\begin{table}[p]
\caption{Estimates of regression coefficients when $n=200$ with $0.3$ to $0.5$ censoring rate (Boldfaced numbers indicate the smallest value in each criterion)}
\label{tab:skewness1} 
	\centering
	\begin{tabular}{ c  c  c   c    c   c  } 
        \hline \hline \noalign{\smallskip}
	\multirow{2}{*}{Method}	&
	\multirow{2}{*}{}	&
	$\lambda = -1$ & $\lambda = -4$ & $\lambda = -10$ & $\lambda = -50$\\
 & & MSE  (Bias) & MSE  (Bias) & MSE  (Bias) & MSE  (Bias) \\ \hline
    \multirow{3}{*}{Normal} & $\beta_0$ &  0.0322  (0.1293) & 0.0394 (0.1474) & 0.0451  (0.1662) & 0.0460  (0.1688)  \\
                            & $\beta_1$ &  0.0124  (0.0431) & 0.0150 (0.0476) & 0.0148  (0.0479) & 0.0151  (0.0499)  \\
                            & $\beta_2$ & 0.0356  (-0.0544) & 0.0419 (-0.0505)& 0.0441  (-0.0773) & 0.0443  (-0.0767)  \\
\hline
    \multirow{3}{*}{SN} & $\beta_0$ & 0.0165  (0.0523) & 0.0154 (-0.0375) &0.0231 (-0.1093) & 0.0289 (-0.1494)  \\
                            & $\beta_1$ & 0.0116  (0.0523) & 0.0110 (0.0648) & 0.0059 (0.0431) & 0.0031 (0.0347)  \\
                            & $\beta_2$ & 0.0298 (-0.0595) & 0.0286 (-0.0552) & 0.0174 (-0.0424) & 0.0079 (-0.0339)  \\
\hline
    \multirow{3}{*}{Gehan} & $\beta_0$ &  0.0101  (0.0424) & $\boldsymbol{0.0101}$ (0.0210)  & 0.0092  (0.0214) & 0.0091  (0.0201)  \\
                           & $\beta_1$ &  0.0059  (0.0063) & 0.0073 (0.0095) & 0.0061  (0.0065) & 0.0060  (0.0077)  \\
                           & $\beta_2$ & 0.0176  (-0.0143) &  0.0226 (-0.0162) & 0.0205 (-0.0232) & 0.0201  (-0.0213) \\
\hline
    \multirow{3}{*}{GEE} & $\beta_0$ &  0.0164  (0.0508) & 0.0167 (0.0206) & 0.0185  (0.0310) & 0.0187  (0.0311) \\
                         & $\beta_1$ &  0.0110  (0.0128) & 0.0129 (0.0065) & 0.0139 (0.0037) & 0.0142  (0.0048)  \\
                         & $\beta_2$ & 0.0333  (-0.0267) & 0.0410 (-0.0132) & 0.0424  (-0.0374) & 0.0420  (-0.0366)  \\
          \hline
    \multirow{3}{*}{SSNSM} & $\beta_0$ &  $\boldsymbol{0.0091}$ (0.0173) & 0.0113 (0.0334) & $\boldsymbol{0.0077}$  (0.0281) & $\boldsymbol{0.0027}$  (0.0147)  \\
                           & $\beta_1$ &  $\boldsymbol{0.0057}$  (0.0115) & $\boldsymbol{0.0071}$ (0.0324) & $\boldsymbol{0.0040}$  (0.0181) & $\boldsymbol{0.0012}$  (0.0110)  \\
                           & $\beta_2$ & $\boldsymbol{0.0150}$  (-0.0102) & $\boldsymbol{0.0192}$ (-0.0228) & $\boldsymbol{0.0155}$  (-0.0141) & $\boldsymbol{0.0039}$ (-0.0031)  \\
		\noalign{\smallskip}\hline\noalign{\smallskip}
	\end{tabular}
\end{table}

\begin{table}[p]
\caption{Estimates of regression coefficients when $n=200$ with $0.6$ to $0.8$ censoring rate (Boldfaced numbers indicate the smallest value in each criterion)}
\label{tab:skewness2} 
	\centering
	\begin{tabular}{ c  c  c   c    c   c  } 
        \hline \hline \noalign{\smallskip}
	\multirow{2}{*}{Method}	&
	\multirow{2}{*}{}	&
	$\lambda = -1$ & $\lambda = -4$ & $\lambda = -10$ & $\lambda = -50$\\
 & & MSE  (Bias) & MSE  (Bias) & MSE  (Bias) & MSE  (Bias) \\ \hline
    \multirow{3}{*}{Normal} & $\beta_0$ & 0.5353 (0.6353) & 0.6855 (0.7114) & 0.8181 (0.7727) & 0.8128 (0.7704) \\
                            & $\beta_1$ &  0.0843 (0.2124) & 0.1020 (0.2052) & 0.1166 (0.2141) & 0.1102 (0.2108)  \\
                            & $\beta_2$ & 0.1448 (-0.2194) & 0.1611 (-0.2110) & 0.2229 (-0.2755) & 0.2194 (-0.2731) \\
\hline
    \multirow{3}{*}{SN} & $\beta_0$ &  0.3707 (0.5275) & 0.2297 (0.3881) & 0.2424 (0.3657) & 0.2341 (0.3492) \\
                            & $\beta_1$ & 0.1078  (0.2560) &  0.1072 (0.2705) & 0.1042 (0.2607) & 0.1033 (0.2596)  \\
                            & $\beta_2$ & 0.1769 (-0.2806) &  0.1556 (-0.2662) & 0.1994 (-0.3165) & 0.1935 (-0.3140)  \\
\hline
    \multirow{3}{*}{Gehan} & $\beta_0$ &  $\boldsymbol{0.1265}$ (0.2122) & $\boldsymbol{0.1102}$ (0.1054)  & 0.1344 (0.1202) & 0.1438 (0.1057)  \\
                           & $\beta_1$ &  $\boldsymbol{0.0277}$ (0.0288) & $\boldsymbol{0.0400}$ (0.0291) & 0.0475 (0.0273) & 0.0439 (0.0251) \\
                           & $\beta_2$ & $\boldsymbol{0.0678}$ (-0.0371) & $\boldsymbol{0.1068}$ (-0.0373)  & 0.1162 (-0.0809) & 0.1216 (-0.0792)  \\
\hline
    \multirow{3}{*}{GEE} & $\beta_0$ &  0.1859 (0.2790)  & 0.1596 (0.1344) & 0.2411 (0.1718) & 0.2398 (0.1591)  \\
                          & $\beta_1$ & 0.0474 (0.0855)  & 0.0553 (0.0369)  & 0.0750 (0.0410) & 0.0719 (0.0360)  \\
                          & $\beta_2$ & 0.1089 (-0.0927) & 0.1342 (-0.0498) & 0.1862  (-0.1153) & 0.1852  (-0.1120)  \\
          \hline
    \multirow{3}{*}{SSNSM} & $\beta_0$ &  0.1876 (0.2742) &  0.1464 (0.2632) & $\boldsymbol{0.0824}$  (0.1723) & $\boldsymbol{0.0523}$  (0.0941) \\
                           & $\beta_1$ &  0.0545 (0.1060) & 0.0501 (0.1450) & $\boldsymbol{0.0252}$ (0.0942) & $\boldsymbol{0.0147}$ (0.0601) \\
                           & $\beta_2$ & 0.1048 (-0.1317) & 0.1098 (-0.1310) & $\boldsymbol{0.0748}$ (-0.1033) & $\boldsymbol{0.0491}$ (-0.0487)  \\
		\noalign{\smallskip}\hline\noalign{\smallskip}
	\end{tabular}
\end{table}

\begin{table}[p]
\caption{Estimates of regression coefficients when $n=400$ with $0.3$ to $0.5$ censoring rate (Boldfaced numbers indicate the smallest value in each criterion)}
\label{tab:skewness3} 
	\centering
	\begin{tabular}{ c  c  c   c    c   c  } 
        \hline \hline \noalign{\smallskip}
	\multirow{2}{*}{Method}	&
	\multirow{2}{*}{}	&
	$\lambda = -1$ & $\lambda = -4$ & $\lambda = -10$ & $\lambda = -50$\\
 & & MSE  (Bias) & MSE  (Bias) & MSE  (Bias) & MSE  (Bias) \\ \hline
    \multirow{3}{*}{Normal} & $\beta_0$ &  0.0245  (0.1263) & 0.0380 (0.1712) & 0.0383  (0.1655) & 0.0382  (0.1645)  \\
                            & $\beta_1$ &  0.0061  (0.0424) & 0.0093 (0.0576) & 0.0089  (0.0527) & 0.0090  (0.0510)  \\
                            & $\beta_2$ & 0.0215  (-0.0530) & 0.0253 (-0.0723)& 0.0260  (-0.0576) & 0.0255  (-0.0552)  \\
\hline
    \multirow{3}{*}{SN} & $\beta_0$ & 0.0087  (0.0316) & 0.0123 (-0.0625) &0.0212 (-0.1212) & 0.0367 (-0.1807)  \\
                            & $\beta_1$ & 0.0062  (0.0449) & 0.0054 (0.0484) & 0.0037 (0.0402) & 0.0015 (0.0257)  \\
                            & $\beta_2$ & 0.0169 (-0.0493) & 0.0131 (-0.0535) & 0.0098 (-0.0486) & 0.0030 (-0.0243)  \\
\hline
    \multirow{3}{*}{Gehan} & $\beta_0$ &  0.0050  (0.0267) & 0.0039 (0.0182)  & 0.0042  (0.0042) & 0.0041  (0.0019)  \\
                           & $\beta_1$ &  0.0024  (0.0030) & 0.0030 (0.0092) & 0.0027  (0.0018) & 0.0026  (0.0008)  \\
                           & $\beta_2$ & 0.0101  (-0.0043) &  0.0097 (-0.0175) & 0.0099 (0.0003) & 0.0095  (0.0015) \\
\hline
    \multirow{3}{*}{GEE} & $\beta_0$ &  0.0084  (0.0371) & 0.0088 (0.0227) & 0.0096  (0.0101) & 0.0100  (0.0072) \\
                         & $\beta_1$ &  0.0044  (0.0090) & 0.0061 (0.0080) & 0.0056 (0.0016) & 0.0059  (-0.0003)  \\
                         & $\beta_2$ & 0.0181  (-0.0198) & 0.0223 (-0.0237) & 0.0240  (-0.0076) & 0.0242  (-0.0053)  \\
          \hline
    \multirow{3}{*}{SSNSM} & $\beta_0$ &  $\boldsymbol{0.0036}$ (-0.0006) & $\boldsymbol{0.0034}$ (0.0064) & $\boldsymbol{0.0034}$  (0.0190) & $\boldsymbol{0.0010}$  (0.0123)  \\
                           & $\beta_1$ &  $\boldsymbol{0.0024}$  (0.0010) & $\boldsymbol{0.0023}$ (0.0074) & $\boldsymbol{0.0017}$  (0.0089) & $\boldsymbol{0.0004}$  (0.0060)  \\
                           & $\beta_2$ & $\boldsymbol{0.0082}$  (-0.0029) & $\boldsymbol{0.0055}$ (-0.0104) & $\boldsymbol{0.0052}$  (-0.0164) & $\boldsymbol{0.0014}$ (-0.0048)  \\
		\noalign{\smallskip}\hline\noalign{\smallskip}
	\end{tabular}
\end{table}

\begin{table}[p]
\caption{Estimates of regression coefficients when $n=400$ with $0.6$ to $0.8$ censoring rate (Boldfaced numbers indicate the smallest value in each criterion)}
\label{tab:skewness4} 
	\centering
	\begin{tabular}{ c  c  c   c    c   c  } 
        \hline \hline \noalign{\smallskip}
	\multirow{2}{*}{Method}	&
	\multirow{2}{*}{}	&
	$\lambda = -1$ & $\lambda = -4$ & $\lambda = -10$ & $\lambda = -50$\\
 & & MSE  (Bias) & MSE  (Bias) & MSE  (Bias) & MSE  (Bias) \\ \hline
    \multirow{3}{*}{Normal} & $\beta_0$ & 0.5862 (0.6850) & 0.6998 (0.7631) & 1.0067 (0.9035) & 1.0101 (0.9038) \\
                            & $\beta_1$ &  0.0764 (0.2274) & 0.0860 (0.2319) & 0.1227 (0.2759) & 0.1247 (0.2771)  \\
                            & $\beta_2$ & 0.1336 (-0.2567) & 0.1156 (-0.2149) & 0.2048 (-0.3049) & 0.2050 (-0.3007) \\
\hline
    \multirow{3}{*}{SN} & $\beta_0$ &  0.3743 (0.5503) & 0.2196 (0.3881) & 0.2787 (0.4394) & 0.2700 (0.4272) \\
                            & $\beta_1$ & 0.0990  (0.2732) &  0.1030 (0.2925) & 0.1199 (0.3089) & 0.1195 (0.3075)  \\
                            & $\beta_2$ & 0.1597 (-0.3050) &  0.1318 (-0.2839) & 0.1910 (-0.3483) & 0.1898 (-0.3459)  \\
\hline
    \multirow{3}{*}{Gehan} & $\beta_0$ &  $\boldsymbol{0.0747}$ (0.1912) & $\boldsymbol{0.0458}$ (0.0469)  & 0.0737 (0.0878) & 0.0741 (0.0762)  \\
                           & $\beta_1$ &  $\boldsymbol{0.0130}$ (0.0293) & $\boldsymbol{0.0176}$ (0.0115) & 0.0215 (0.0319) & 0.0212 (0.0307) \\
                           & $\beta_2$ & $\boldsymbol{0.0360}$ (-0.0474) & $\boldsymbol{0.0424}$ (-0.0009)  & 0.0593 (-0.0493) & 0.0605 (-0.0450)  \\
\hline
    \multirow{3}{*}{GEE} & $\beta_0$ &  0.1148 (0.2392)  & 0.0764 (0.0696) & 0.1202 (0.1156) & 0.1179 (0.1003)  \\
                          & $\beta_1$ & 0.0242 (0.0644)  & 0.0278 (0.0197)  & 0.0334 (0.0361) & 0.0333 (0.0322)  \\
                          & $\beta_2$ & 0.0614 (-0.0938) & 0.0697 (-0.0128) & 0.0997  (-0.0689) & 0.1002  (-0.0596)  \\
          \hline
    \multirow{3}{*}{SSNSM} & $\beta_0$ &  0.0819 (0.1547) &  0.1351 (0.2560) & $\boldsymbol{0.0423}$  (0.1573) & $\boldsymbol{0.0161}$  (0.0719) \\
                           & $\beta_1$ &  0.0241 (0.0658) & 0.0372 (0.1280) & $\boldsymbol{0.0160}$ (0.0869) & $\boldsymbol{0.0061}$ (0.0458) \\
                           & $\beta_2$ & 0.0430 (-0.0792) & 0.0857 (-0.1414) & $\boldsymbol{0.0279}$ (-0.0852) & $\boldsymbol{0.0127}$ (-0.0326)  \\
		\noalign{\smallskip}\hline\noalign{\smallskip}
	\end{tabular}
\end{table}

The performance of the methods under different skewness levels is presented in Tables \ref{tab:skewness1} - \ref{tab:skewness4}. Across scenarios with 0.3 to 0.5 censoring rates, the SSNSM method generally outperforms other approaches. However, for censoring rates of 0.6 to 0.8, Gehan shows superior performance specifically when $\lambda$ is set to $-1$ and $-4$. Nevertheless, SSNSM remains competitive, exhibiting the best performance among MLE methods in these specific cases. The performance of the SSNSM method improves with higher slant parameters, showcasing its adaptability and capacity to offer more precise estimates under increased skewness.
In contrast, the Normal and SN methods exhibit decreased performance as slant parameters increase due to potential misspecification issues and reduced efficiency.

\section{Real data analysis}   
\label{sec6}

In this section, we employ two real datasets.  
For evaluating prediction performance of survival probability during a specific period, brier score \citep{graf1999assessment} can be used as
\begin{align*}
BS( t^*) = \frac{1}{n} \sum_{i=1}^{n} \Bigg \{   \frac{\hat{S}(\log t^* | \xb_i)^2}{\hat{G}(\log(Y_i))} I(\log(Y_i) < \log t^*, \delta_i = 1)  +  \frac{(1-\hat{S}(\log t^* | \xb_i))^2}{\hat{G}(\log t^*)} I(\log(Y_i) > \log t^*)  \Bigg \}, 
\end{align*}
where $t^*$ indicates the specific time point at which the brier score is computed, $\hat{S}(\cdot)$ denotes the survival function predicted by the method, and $\hat{G}(\cdot)$ is the Kaplan-Meier estimate of the survival function corresponding to censoring. 
As the brier score can be influenced by the choice of a single time point, we compute the integrated brier score (IBS) defined as
\begin{align*}
IBS(t_{\text{max}}) = \frac{1}{t_{\text{max}}} \int_{0}^{t_{\text{max}}} BS(t) dt.  
\end{align*}

\subsection{Lung cancer dataset}

The dataset utilized in the first analysis is sourced from some patients with advanced lung cancer, originally collected by the North Central Cancer Treatment Group  \citep{loprinzi1994prospective}. It is accessible through the $\texttt{survival}$ package in \texttt{R} \citep{therneau2019package}. This dataset comprises a total of 228 patients and incorporates various prognostic variables, including Age, Sex, ECOG (ECOG performance score as assessed by physicians), log(Phy) (logarithm of the Karnofsky performance score assessed by physicians), log(Pat) (logarithm of the Karnofsky performance score assessed by patients), log(Cal) (logarithm of Calories consumed during meals), and Loss (Weight loss in pounds during the last six months).
Observations with missing covariate information were excluded, resulting in a dataset of 167 observations with a censoring rate of $28.1\%$.

The estimates obtained from the five methods are presented in Table \ref{tab:lung}. Standard error estimates are computed using 500 bootstrap replicates for all estimators. Generally, the estimates provided by the five methods are similar, except for the $\log$(Cal) estimate from SSNSM. In the case of SSNSM, the $\log$(Cal) estimate is nearly zero with a negative sign, while the estimates from the other methods are positive. Since $\log$(Cal) does not appear to statistically significant to the failure time in any of the methods presented in Table \ref{tab:lung}, it seems reasonable for SSNSM to produce a nearly zero estimate for this variable.

According to the results obtained from SSNSM, the effects of Sex and ECOG are statistically significant. In Figure \ref{survival_plot (lung)}, the conditional survival probability for each case is displayed using SSNSM. Estimated conditional survival curves are constructed for each Sex and ECOG while fixing the values of the other covariates.
Specifically, females tend to have a longer failure time than males, while an increase in the ECOG rating corresponds to a decrease in the failure time. 
Furthermore, the IBS obtained for Normal, SN, Gehan, GEE and SSNSM are $0.0904$, $0.0863$, $0.0932$, $0.0877$ and $0.0851$, respectively.
That is, SSNSM exhibits the best performance, closely followed by SN and GEE.
Note that we designate $t_{\text{max}}$ as $1826.25$, which corresponds to a span of 5 years.

\begin{table}[p]
\caption{Estimated regression coefficients and standard errors in parentheses from the Lung cancer dataset}
\label{tab:lung} 
	\centering
	\begin{tabular}{   c  c c  r r   c  } 
        \hline \hline \noalign{\smallskip}
	
	Method	&
	Intercept & Age & Sex & ECOG  \\ \hline
     \multirow{2}{*}{Normal} &  6.2178 & -0.0126 & 0.5132 & -0.4844   \\
       &  (5.2979) & (0.0098) & (0.1806) & (0.2336)  \\
     \multirow{2}{*}{SN} & 9.8074  & -0.0110 & 0.3875 & -0.5338   \\
       &  (4.4152) & (0.0100) & (0.1711) & (0.1854)  \\
       \multirow{2}{*}{Gehan} &  5.4936 & -0.0138 & 0.6585 & -0.5831   \\
       &  (6.0952) & (0.0120) & (0.2008) & (0.2530)  \\
          \multirow{2}{*}{GEE} & 6.0252 & -0.0119 & 0.4977 & -0.4633    \\
       &  (5.0622) & (0.0096) & (0.1812) & (0.2313)  \\
         \multirow{2}{*}{SSNSM} & 8.9619 & -0.0050 & 0.3820 & -0.5368  \\
      &  (5.5653) & (0.0114) & (0.1756) & (0.2047)   \\
               
        \hline \hline \noalign{\smallskip}
		Method	&
	$\log$(Phy) & $\log$(Pat) & $\log$(Cal) & Loss \\ \hline
  \multirow{2}{*}{Normal} &  -0.8375 & 0.3933 & 0.2830 & 0.0093   \\
     &  (1.0297) & (0.5287) & (0.1980) & (0.0060)   \\
    \multirow{2}{*}{SN} &  -1.3550 & 0.5289 & 0.0033 & 0.0104   \\
       &  (0.8007) & (0.4436) & (0.1974) & (0.0066)  \\
      \multirow{2}{*}{Gehan} &  -0.6304 & 0.4209 & 0.2322 & 0.0081   \\
      &  (1.2227) & (0.5848) & (0.2375) & (0.0091)  \\
     \multirow{2}{*}{GEE} & -0.8055 & 0.3580 & 0.3019 & 0.0091  \\
      &  (0.9936) & (0.5111) & (0.1932) & (0.0057)  \\
     \multirow{2}{*}{SSNSM} & -1.2288 & 0.5614 & -0.0237 & 0.0090  \\
      &  (1.0608) & (0.4626) & (0.2159) & (0.0073)  \\
\noalign{\smallskip}\hline\noalign{\smallskip}
	\end{tabular}
\end{table}

\begin{figure}[p] 
\centering
\subfigure[Sex]{
\includegraphics[width=0.42\linewidth]{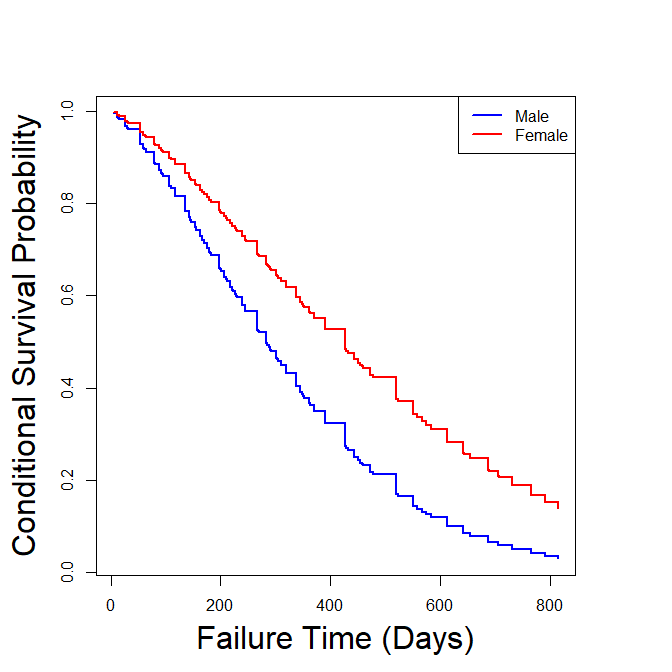}
}
\centering
\subfigure[ECOG]{
\includegraphics[width=0.42\linewidth]{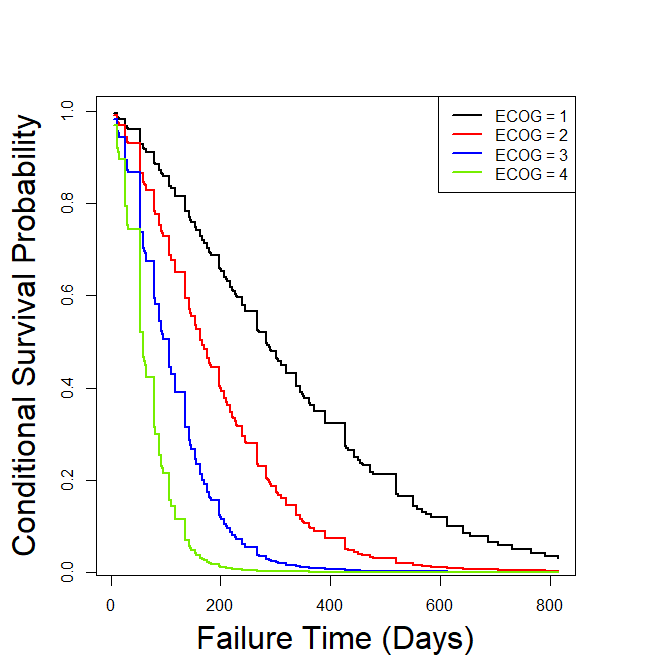}
}
\caption{Plots of the conditional survival probabilities based on SSNSM in the Lung cancer dataset for Sex and ECOG}
\label{survival_plot (lung)}
\end{figure}

\subsection{Breast cancer dataset}

For the second real data analysis, we explore data concerning breast cancer female patients sourced from the Surveillance, Epidemiology, and End Results (SEER) program of the National Cancer Institute (NCI), which offers comprehensive population-based cancer statistics. Our study aims to investigate the impact of covariates on the failure time specifically among individuals identified as Black in race.
This selected dataset comprises 291 patients, with a censoring rate of $74.9\%$. It includes some predictors such as Age, Size (indicating tumor size), Node examined (indicating the total number of regional lymph nodes that were removed and examined by the pathologist) and Node positive (indicating the exact number of regional lymph nodes examined by the pathologist that were found to contain metastases).
The dataset is available at https://dx.doi.org/10.21227/a9qy-ph35.

In Table \ref{tab:breast}, estimates from five different methods are displayed along with their corresponding standard error estimates computed using 500 bootstrap replicates. Overall, the estimates across the five methods exhibit similar values, except for the Age. In the case of SN and SSNSM, the Age estimates exhibit negative values, while estimates from the other methods are positive. The results for SSNSM show that the exact number of regional lymph nodes is significant to the failure time. Figure \ref{fig: survival_prob (breast)} depicts the conditional survival curves for selected values of the exact number of regional lymph nodes.  
Specifically, as the exact number of regional lymph nodes increases, the corresponding estimated survival function tends to decrease. 
The obtained IBS for Normal, SN, Gehan, GEE and SSNSM are $0.1814$, $0.1808$, $0.1913$, $0.1797$ and $0.1788$, respectively. SSNSM displays the best performance, closely followed by GEE. Note that we set $t_{\text{max}}$ as $90$ months.

\begin{table}[ht]
\caption{Estimated regression coefficients and standard errors in parentheses from the Breast cancer dataset}
\label{tab:breast} 
	\centering
	\begin{tabular}{ c  c c  c c   c     } 
        \hline \hline \noalign{\smallskip}
	Method	&
	Intercept & Age & Size & Node examined & Node positive  \\ \hline
     \multirow{2}{*}{Normal} &  5.2774 & 0.0012 & -0.0006 & 0.0288 & -0.0996  \\
       &  (0.7957) & (0.0134) & (0.0051) & (0.0188) & (0.0231)   \\
     \multirow{2}{*}{SN} & 5.1470  & -0.0024 & -0.0009 & 0.0280 & -0.0925  \\
       &  (0.7649) & (0.0140) & (0.0051) & (0.0184) & (0.0232)  \\
      \multirow{2}{*}{Gehan} &  5.9042 & 0.0011 & -0.0076 & 0.0468 & -0.1583   \\
      &  (1.6544) & (0.0230) & (0.0110) & (0.0341) & (0.0338) \\
    \multirow{2}{*}{GEE} & 5.1333 & 0.0024 & -0.0005 & 0.0265 & -0.0965   \\
       &  (0.7803) & (0.0127) & (0.0050) & (0.0181) & (0.0232)   \\
     \multirow{2}{*}{SSNSM} & 5.2884 & -0.0047 & -0.0007 & 0.0299 & -0.0954  \\
     &  (0.8312) & (0.0152) & (0.0053) & (0.0190) & (0.0250)   \\
    \noalign{\smallskip}\hline\noalign{\smallskip}
	\end{tabular}
\end{table}

\begin{figure}[ht] 
\centering
\includegraphics[width=0.45\linewidth]{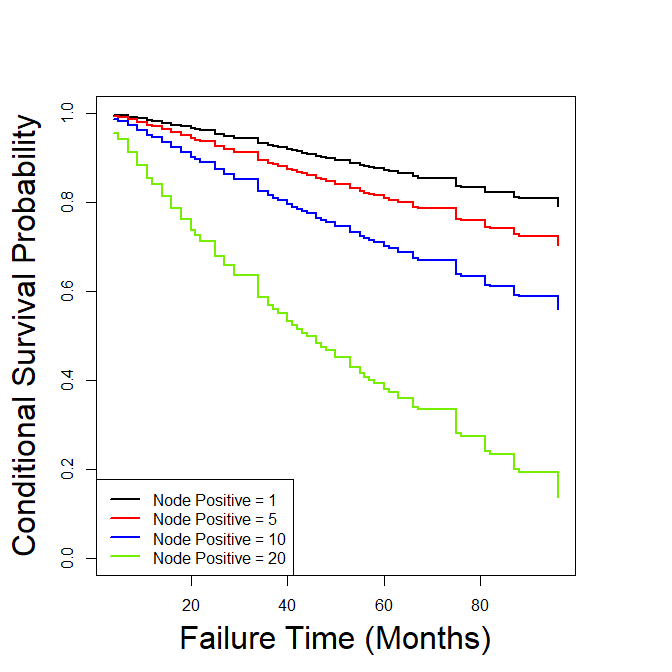}
\caption{Plots of the conditional survival probabilities based on SSNSM in the Breast cancer dataset for the exact number of regional lymph nodes}
\label{fig: survival_prob (breast)}
\end{figure}

\section{Discussion}
\label{sec7}

Ensuring the accuracy and efficiency of estimates in parametric AFT models requires addressing the potential misspecification issue stemming from incorrect assumptions about the error distribution. In our approach, we propose an AFT model that incorporates a semiparametric skew-normal scale mixture error distribution, offering robustness and flexibility. By adopting this model, we can obtain more reliable estimates even when the exact error distribution is unknown. 

While proposed model demonstrates robustness, it is important to note that the mean regression approach may not be suitable when the error distribution is asymmetric. To overcome this limitation, alternative methods have been proposed. For instance, \cite{galarza2017robust} proposed a quantile regression model utilizing specific members of the skew-normal scale mixture error distributions. Similarly, \cite{galarza2021skew} introduced a quantile regression model for censored and missing data using the skew-t error distribution. Additionally, \cite{zeng2022estimation} proposed the finite mixture of mode regression with a skew-normal distribution.

However, these approaches still rely on parametric assumptions, which can lead to misspecification issues. To address this concern, one possible solution is to consider quantile regression or modal regression models with semiparametric skew-normal scale mixture errors. We expect that these approaches may provide greater flexibility and effectively handle data with asymmetric distributions, reducing the risk of misspecification and improving the accuracy of the estimates.

\bibliographystyle{apa} 
\bibliography{references}

\section*{Appendix A: Proof of Theorem \ref{thm1}}
In \eqref{aft_ssnsm}, suppose that there exists $\tilde{b}_0$, $\tilde{\betab}$, $\tilde{\lambda}$ and $\tilde{Q}$ satisfying 
\begin{align}
    \int_{0}^{\infty} \frac{2}{\sigma} \phi\left(\frac{\log t - b_0 - \xb^{\top}\betab}{\sigma}\right)  \Phi\left(\lambda \frac{\log t - b_0 - \xb^{\top}\betab}{\sigma}\right) dQ(\sigma) \notag \\ 
    = \int_{0}^{\infty} \frac{2}{\sigma} \phi\left(\frac{\log t - \tilde{b}_0 - \xb^{\top}\tilde{\betab}}{\sigma}\right)  \Phi\left(\tilde{\lambda} \frac{\log t - \tilde{b}_0 - \xb^{\top}\tilde{\betab} }{\sigma}\right) d\tilde{Q}(\sigma). \label{ident}
\end{align}
The characteristic function of the left hand side of \eqref{ident} is 
\begin{align*}
\psi_{\log T | \Xb} (s) 
&=  \exp(i ( b_0 + \xb^{\top}\betab) s) \psi_{Q, \lambda} (s)
\end{align*}
where $i=\sqrt{-1}$, $z = \log t - b_0 - \xb^{\top}\betab$ and $\psi_{Q, \lambda} (s)$ is the characteristic function of the semiparametric skew-normal scale mixture distribution with the location $0$, latent distribution $Q$ and slant parameter $\lambda$.
The characteristic function of the right hand side of \eqref{ident} can be similarly represented as  
\begin{align*}
\tilde{\psi}_{\log T | \Xb} (t) &= \exp(i ( \tilde{b}_0 + \xb^{\top} \tilde{\betab}) s) \psi_{\tilde{Q}, \tilde{\lambda}} (s).
\end{align*}
Then the following equality must hold:
\begin{align}
\label{ident2}
 \exp(i (b_0 + \xb^{\top}\betab) s) \psi_{Q, \lambda} (s) =  \exp(i (\tilde{b}_0 + \xb^{\top} \tilde{\betab})  s) \psi_{\tilde{Q}, \tilde{\lambda}} (s). 
\end{align}

Let $U_1$ be an open set $U_{1} =\{\ub_{1} \in \mathbf{\mathcal{X}} | b_0 + \ub_{1}^{\top} \betab  \ne \tilde{b}_0 + \ub_{1}^{\top} \tilde{\betab} \}$, where $\mathbf{\mathcal{X}}$ is the support of $\xb$. Because $b_0 + \ub_{1}^{\top} \betab$ and $\tilde{b}_0 + \ub_{1}^{\top} \tilde{\betab}$ are continuous function for all $u_{1} \in U_{1}$, there exist open sets $V_{1}$ and $\tilde{V}_{1}$ in $\mathbb{R}$ satisfying
\begin{align*}
 \exp(i v_{1} s) \psi_{Q, \lambda} (s) =  \exp(i \tilde{v}_{1}  s) \psi_{\tilde{Q}, \tilde{\lambda}} (s),
\end{align*}
where $v_{1}\in V_{1}$ and $\tilde{v}_{1} \in \tilde{V}_{1}$.

Let $\kappa > 0$ satisfying $cv_{1} \in V_1$ and $c\tilde{v}_{1} \in \tilde{V}_{1}$ for all $c \in (1 - \kappa, 1 + \kappa)$. Then, the following equation is also hold. 
\begin{align}
\label{ident3}
\exp(i c v_{1}  s)  \psi_{Q, \lambda} (s) =  \exp(i c\tilde{v}_{1}  s) \psi_{\tilde{Q}, \tilde{\lambda}} (s).
\end{align}
Multiplying $\exp(-i c v_{1} s)$ on both sides of equation \eqref{ident3}, we have
\begin{align}
\label{ident4}
 \psi_{Q,\lambda} (s)
&= \exp(i c (\tilde{v}_{1} - v_{1})  s) \psi_{\tilde{Q},\tilde{\lambda}} (s)
\end{align}
for all $c \in (1 - \kappa, 1 + \kappa)$. Because the exponential function is analytic on the whole complex plane, if \eqref{ident4} holds for $c\in (1-\kappa,1+\kappa)$, \eqref{ident4} should also hold for all $-\infty<c<\infty$. Hence, the following equality should hold:
\begin{align}
\psi_{Q,\lambda} (s)
&=  \Bigg \{ \frac{1}{2C}\int_{-C}^C \exp(i c r_{1} ) dc \Bigg \} \psi_{\tilde{Q},\tilde{\lambda}} (s),
\end{align}
where $r_{1} = (\tilde{v}_{1} - v_{1}) s$, for all $C > 0$.

Letting $C \rightarrow \infty$, we can conclude $ \psi_{Q,\lambda} (s) = 0$ from 
\begin{align*}
    \lim_{C \to \infty} \frac{1}{2C} \int_{-C}^{C} \exp(ic r_{1}) dc 
        &= 0
\end{align*}
for any nonzero real number $r_1$.
However, letting $s \rightarrow 0$ implies $\psi_{Q,\lambda} (s) = 1$ which contradicts $ \psi_{Q,\lambda} (s) = 0$. 
As a result, we have $b_0 + \ub_{1}^{\top} \betab  = \tilde{b}_0 + \ub_{1}^{\top} \tilde{\betab}$.

Based on equation \eqref{ident2}, we can further derive that
\begin{align*}
 \exp(i b_0 s) \exp(i  \xb^{\top}\betab s) \psi_{Q, \lambda} (s) = \exp(i \tilde{b}_0  s) \exp(i \xb^{\top} \tilde{\betab}  s) \psi_{\tilde{Q}, \tilde{\lambda}} (s). 
\end{align*}
Suppose that $U_2$ is an open set as $U_{2} =\{\ub_{1} \in \mathbf{\mathcal{X}} | \ub_{1}^{\top} \betab  \ne \ub_{1}^{\top} \tilde{\betab} \}$. Within this set, we can find open sets $V_2$ and $\tilde{V}_2$ in $\mathbb{R}$ such that
\begin{align*}
\exp(i b_0 s) \psi_{Q, \lambda} (s) = \exp(i \tilde{b}_0 s) \psi_{\tilde{Q}, \tilde{\lambda}} (s) \Bigg \{ \frac{1}{2C}\int_{-C}^C \exp(i c r_{2} s) dc \Bigg \},
\end{align*}
where $r_2 = (\tilde{v}_{2} - v_{2}) s$, $v_{2}\in V_{2}$, $\tilde{v}_{2} \in \tilde{V}_{2}$, and $c \in (1 - \kappa, 1 + \kappa)$ satisfying $cv_{2} \in V_2$ and $c\tilde{v}_{2} \in \tilde{V}_2$. 
Letting $C \rightarrow \infty$, we obtain  
\begin{align}
\label{ident5}
\exp(i b_0 s) \psi_{Q, \lambda} (s) = 0.
\end{align}
However, the left hand side of \eqref{ident5} is $1$ if $s \rightarrow 0$, which contradicts the equation. 
Therefore, $(b_0, \betab^{\top})^{\top}$ must be $(\tilde{b}_0, \tilde{\betab}^{\top})^{\top}$. 
Because $(b_0, \betab^{\top})^{\top} = (\tilde{b}_0, \tilde{\betab}^{\top})^{\top}$, we can conclude that $\psi_{Q, \lambda} (s)$ is also equal to $ \psi_{\tilde{Q}, \tilde{\lambda}} (s)$.

Now, we need to verify that $\psi_{Q, \lambda} (s) = \psi_{\tilde{Q}, \tilde{\lambda}} (s)$ implies $Q = \tilde{Q}$ and $\lambda = \tilde{\lambda}$.
$\psi_{Q, \lambda} (s) $ is obtained as follows.
\begin{align*}
\psi_{Q, \lambda} (s) 
&=  \int_{0}^{\infty} \exp \Bigg(\frac{-s^2\sigma^2}{2}\Bigg) dQ(\sigma) + i  \int_{0}^{\infty} \exp \Bigg(\frac{-s^2\sigma^2}{2} \Bigg) e \Bigg( \frac{\Gamma \sigma s}{ \sqrt{2}} \Bigg ) dQ(\sigma), 
\end{align*}
where $\Gamma = \lambda / \sqrt{1 + \lambda^2}$ and $e (\cdot)$ is a complementary error function defined as
\begin{align*}
e(z) = - \frac{2i}{\sqrt{\pi}} \int_{0}^{iz} \exp(-t^2) dt.
\end{align*}
As the same way, 
\begin{align*}
\psi_{\tilde{Q}, \tilde{\lambda}} (s) &=  \int_{0}^{\infty} \exp\Bigg(\frac{-s^2\sigma^2}{2}\Bigg) d\tilde{Q}(\sigma) + i  \int_{0}^{\infty} \exp\Bigg(\frac{-s^2\sigma^2}{2}\Bigg) e \Bigg( \frac{\tilde{\Gamma} \sigma s}{\sqrt{2}} \Bigg) d\tilde{Q}(\sigma), 
\end{align*}
where  $\tilde{\Gamma} = \tilde{\lambda} / \sqrt{1 + \tilde{\lambda}^2}$.

Because $\psi_{Q, \lambda} (s) = \psi_{\tilde{Q}, \tilde{\lambda}} (s)$, we obtain
\begin{align}
\int_{0}^{\infty}\exp\Bigg(\frac{-s^2\sigma^2}{2}\Bigg) dQ(\sigma) &= \int_{0}^{\infty} \exp\Bigg(\frac{-s^2\sigma^2}{2}\Bigg) d\tilde{Q}(\sigma), \label{first} \\
\int_{0}^{\infty} \exp\Bigg(\frac{-s^2\sigma^2}{2}\Bigg) e \Bigg( \frac{\Gamma \sigma s}{ \sqrt{2}} \Bigg ) dQ(\sigma) &= \int_{0}^{\infty} \exp\Bigg(\frac{-s^2\sigma^2}{2}\Bigg) e \Bigg( \frac{\tilde{\Gamma} \sigma s}{\sqrt{2}} \Bigg) d\tilde{Q}(\sigma). \label{second}
\end{align}
The equation \eqref{first} implies that $Q = \tilde{Q}$ due to the uniqueness of the Laplace transform. 
Furthermore, in the \eqref{second}, $\Gamma$ is equal to $\tilde{\Gamma}$ because $e(\cdot)$ is one to one increasing function and $Q = \tilde{Q}$. Therefore, $\lambda = \tilde{\lambda}$ is also hold.

\section*{Appendix B: Proof of Theorem \ref{thm2}}

Let the density of $(\log Y = z, \delta)$ be 
    \begin{align*}
        p(z, \delta ; \bm{b}, \lambda, Q) = \Bigg \{ f(z ; \bm{b}, \lambda, Q) (1 - G(z) ) \Bigg \}^{\delta}  \Bigg \{g(z) S(z ; \bm{b}, \lambda, Q)  \Bigg \}^{1-\delta},
    \end{align*}
where $\bm{b} = (b_0, \betab)$ and $\hat{\bm{b}} = (\hat{b}_0, \hat{\betab})$.
Moreover, let us define the metric as
    \begin{align*}
        d \Bigg ( (\bm{b}, \lambda, Q), (\hat{\bm{b}}, \hat{\lambda}, \hat{Q})  \Bigg ) = \| \bm{b} - \hat{\bm{b}} \|_1 + | \lambda - \hat{\lambda} | + \int_{0}^{\infty} | Q (\sigma) - \hat{Q} (\sigma) | \exp(-|\sigma|) d\tau(\sigma),
    \end{align*}
where $\tau$ is the Lebesgue measure on $\mathbb{R}^{+}$.
\cite{kiefer1956consistency} proposed five assumptions, including continuity (Assumption 2), identifiability (Assumption 4) and integrability (Assumption 5), which guarantee that $(\hat{\bm{b}}, \hat{\lambda}, \hat{Q})$ converges in probability to $(\bm{b}, \lambda, Q)$, as denoted by $d \Bigg ( (\bm{b}, \lambda, Q), (\hat{\bm{b}}, \hat{\lambda}, \hat{Q}) \Bigg ) \overset{p}{\rightarrow} 0$.

The density of SSNSM in \eqref{aft_ssnsm} trivially satisfies the Assumption 1,2 and 3. Additionally, Assumption 4 is verified by Theorem \ref{thm1}.
Since $Q$ is supported by $[\ell, \infty)$, $p(z, \delta ; \bm{b}, \lambda, Q)$ is uniformly bounded, it implies that we just need to check if $- E \Bigg[\log p(Z, \delta ; \bm{b}, \lambda, Q)  \Bigg] < \infty$ for Assumption 5.
Since $\log p(z, \delta ; \bm{b}, \lambda, Q)$ can be derived as
    \begin{align*}
        \log p(z, \delta ; \bm{b}, \lambda, Q)  &= \delta \Bigg( \log f(z ; \bm{b}, \lambda, Q) + \log (1 - G(z) ) \Bigg)  \\
        &+ (1 - \delta) \Bigg( \log g(z) + \log S(z ; \bm{b}, \lambda, Q) \Bigg), 
    \end{align*}
$- E \Bigg[  \log p(Z, \delta) \Bigg ]$ can be written as
    \begin{align}
     - E \Bigg[  \log p(Z, \delta  ; \bm{b}, \lambda, Q) \Bigg ]  
     &=  - \int_{-\infty}^{\infty} \log \Bigg(f(z  ; \bm{b}, \lambda, Q)\Bigg)  f(z  ; \bm{b}, \lambda, Q) (1 - G(z)) dz  \label{1} \\
     &- \int_{-\infty}^{\infty}  \log \Bigg(1 - G(z)\Bigg)  f(z  ; \bm{b}, \lambda, Q) (1 - G(z)) dz \notag \\
     & - \int_{-\infty}^{\infty}  \log \Bigg( g(z) \Bigg) g(z) S(z  ; \bm{b}, \lambda, Q) dz \label{3} \\
     & - \int_{-\infty}^{\infty} \log \Bigg( S(z  ; \bm{b}, \lambda, Q) \Bigg) g(z) S(z  ; \bm{b}, \lambda, Q) dz \notag\\ 
     &\leq 2 \exp(-1) + \eqref{1} + \eqref{3} \notag,
    \end{align} 
since $0 \leq  -z \log z \leq \exp(-1)$ for $0 < z \leq 1$.

Now, we verify both \eqref{1} and \eqref{3} are bounded. Since the skew-normal distribution has a finite first moment, it also trivially has $\int_{-\infty}^{\infty} f(z ; \bm{b}, \lambda, \sigma) [ \log |z| ]^{+} dz < \infty $, where $f(z ; \bm{b}, \lambda, \sigma)$ is the density of the skew-normal distribution. As a result, we have $E\Bigg[\log | Z| \Bigg]^{+} < \infty$ by the condition $\int_{\ell}^{\infty} \log \sigma d Q(\sigma) < \infty$.
It implies that $- \int_{-\infty}^{\infty} \log \Bigg(f(z  ; \bm{b}, \lambda, Q)\Bigg)  f(z  ; \bm{b}, \lambda, Q) dz  < \infty$ by the lemma at the Section 2 in \cite{kiefer1956consistency}. Since $0 \leq 1-G(z) \leq 1$, \eqref{1} is also bounded. Furthermore, \eqref{3} is bounded by the condition $-\int_{-\infty}^{\infty} \log g(t) d G(t) < \infty$. Therefore, $ - E \Bigg[  \log p(Z, \delta  ; \bm{b}, \lambda, Q) \Bigg ] < \infty$.

\end{document}